\definecolor{subsectioncolor}{rgb}{0, 0, 0.706}   
\definecolor{mylightblue}{rgb}{0.851, 0.898, 0.941} %
\definecolor{LightCyan}{rgb}{0.88,1,1}
\definecolor{LightOrange}{rgb}{1,0.85,0.70}
\newtheorem{Proposition}{Proposition}[section]
\newcommand{\bmx}{\bm{\mathrm{x}}}
\newcommand{\E}{\mathbb{E}}
\newcommand{\bxz}[0]{\mathbf{x}_0}
\newcommand{\bxt}[0]{\mathbf{x}_t}
\newcommand{\bxtt}[1]{\mathbf{x}_{#1}}
\renewcommand{\footnote}[1]{\oldfootnote{\scriptsize #1}}
\let\oldfootnote\footnote
\begin{document}
\title{Bi-level Guided Diffusion Models for Zero-Shot Medical Imaging Inverse Problems}

\author{Hossein Askari, Fred Roosta, Hongfu Sun
\thanks{Hossein Askari and Hongfu Sun are affiliated with the School of Electrical Engineering and Computer Science, while Fred Roosta is associated with the School of Mathematics and Physics, both at the University of Queensland, Brisbane, QLD, 4072, Australia. Emails: h.askari, fred.roosta, hongfu.sun@uq.edu.au.
}}

\maketitle

\begin{abstract}
In the realm of medical imaging, inverse problems aim to infer high-quality images from incomplete, noisy measurements, with the objective of minimizing expenses and risks to patients in clinical settings. The Diffusion Models have recently emerged as a promising approach to such practical challenges, proving particularly useful for the zero-shot inference of images from partially acquired measurements in Magnetic Resonance Imaging (MRI) and Computed Tomography (CT). A central challenge in this approach, however, is how to guide an unconditional prediction to conform to the measurement information. Existing methods rely on deficient projection or inefficient posterior score approximation guidance, which often leads to suboptimal performance. In this paper, we propose \underline{\textbf{B}}i-level \underline{G}uided \underline{D}iffusion \underline{M}odels ({BGDM}), a zero-shot imaging framework that efficiently steers the initial unconditional prediction through a \emph{bi-level} guidance strategy. Specifically, BGDM first approximates an \emph{inner-level} conditional posterior mean as an initial measurement-consistent reference point and then solves an \emph{outer-level} proximal optimization objective to reinforce the measurement consistency. Our experimental findings, using publicly available MRI and CT medical datasets, reveal that BGDM is more effective and efficient compared to the baselines, faithfully generating high-fidelity medical images and substantially reducing hallucinatory artifacts in cases of severe degradation.

\end{abstract}

\begin{IEEEkeywords}
Diffusion Models, Bi-level Guidance, Conditional Sampling, Inverse Problems, Medical Imaging.
\end{IEEEkeywords}

\IEEEpeerreviewmaketitle

\section{Introduction}

\IEEEPARstart{C}{Contemporary} diagnostic medicine highly relies on advanced, non-invasive imaging techniques, notably Magnetic Resonance Imaging (MRI) and Computed Tomography (CT). Their unparalleled accuracy in capturing detailed anatomical measurements is of paramount importance for identifying internal abnormalities. In MRI, the Fourier transform of the spatial distribution of proton spins from the subject is acquired as measurements, which is commonly referred to as `k-space' in medical imaging contexts. In the case of CT, raw measurements, also known as `sinograms', are derived from X-ray projections obtained at various orientations around the patient. However, full k-space and sinogram acquisitions in MRI and CT often require prolonged scan durations and may pose health risks due to increased heat and radiation exposures \cite{lustig2007sparse, brenner2007computed}. In light of these implications, there have been ongoing efforts toward reducing the number of measurements, exemplified by undersampled k-spaces in MRI and sparse-view sinograms in CT. While advantageous in accelerating medical imaging procedures, sparsification and undersampling introduce difficulties in reconstructing accurate and high-quality medical images \cite{donoho2006compressed}.

Medical image reconstruction can be mathematically characterized as solving an ill-posed linear inverse problem \cite{arridge1999optical, bertero2021introduction}. The linear inverse problem is formulated as recovering an unknown target signal of interest $\textstyle{\mathbf{x} \in \mathcal{X} \subseteq  \mathbb{C}^{n}}$ from a noisy observed measurement $\textstyle{\mathbf{y} \in \mathcal{Y} \subseteq \mathbb{C}^{m}}$, given by $\mathbf{y}=\mathcal{A}\mathbf{x}+\mathbf{n}$,     
where $\textstyle{\mathcal{A} \in \mathbb{C}^{m\times n}}$ is a matrix that models a known linear measurement acquisition process (a.k.a. forward operator $\textstyle{\mathcal{A}: \mathbb{C}^{m} \rightarrow \mathbb{C}^{n})}$, and $\textstyle{\mathbf{n} \in \mathbb{C}^{m \times 1}}$ is an additive noise, simply treated here to follow the Gaussian distribution $\textstyle{\mathbf{n} \sim \mathcal{N} (\mathbf{0}, \boldsymbol{\sigma}_{\mathbf{y}}^2 \mathbf{I})}$. If the forward operator $\textstyle{\mathcal{A}}$ is singular, e.g., when $m < n$, the problem is ill-posed, indicating that the solution might not exist, be unique, or depend continuously on the measurements \cite{o1986statistical}. To mitigate the ill-posedness, it is essential to incorporate an additional assumption based on \emph{prior} knowledge to constrain the space of possible solutions. In this manner, the inverse problem then can be addressed by optimizing or sampling a function that integrates this prior or regularization term with a data consistency or likelihood term \cite{ongie2020deep}. 

A prevalent approach for prior imposition is to employ pre-trained deep generative models \cite{bora2017compressed, jalal2021robust}.
\begin{figure*}[!t]
\centering
\includegraphics[width=0.92\textwidth, height=6cm]{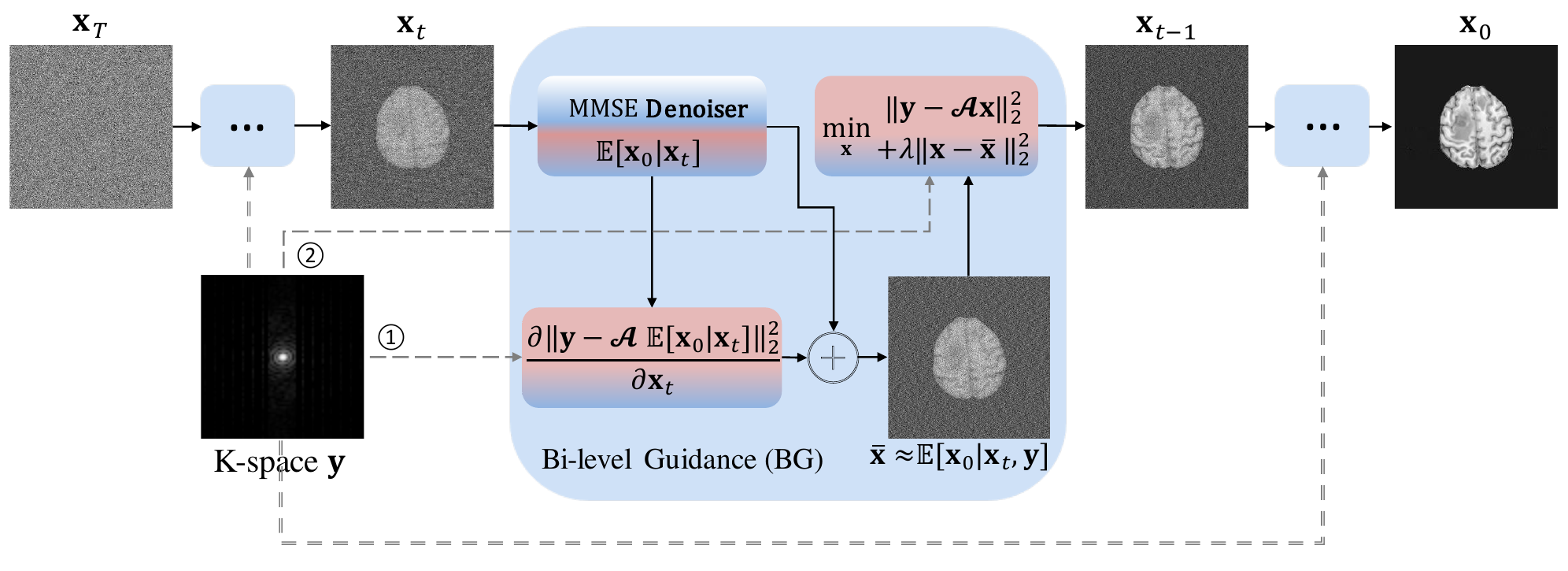}
\caption{A high-level illustration of \textbf{BGDM} that leverages pre-trained diffusion models to solve inverse problems via a \emph{bi-level} guidance (BG) strategy. BG encourages consistency at two levels by utilizing information from observed measurements $\mathbf{y}$ and the forward model $\mathcal{A}$. The \emph{inner level} \textcircled{\raisebox{-0.08pt}{1}} establishes the reference point $\bar{\mathbf{x}}$ by approximating the conditional posterior mean $\mathbb{E}[\mathbf{x}_0|\mathbf{x}_t, \mathbf{y}]$, employing gradient guidance for this purpose. Subsequently, the \emph{outer level} \textcircled{\raisebox{-0.08pt}{2}} tackles a proximal optimization objective to enforce measurement consistency further.}
\label{fig1} 
\vspace{-8pt}
\end{figure*}
A novel class of deep generative models are Diffusion Models (DMs) \cite{sohl2015deep, song2019generative, ho2020denoising, song2020score, yang2022diffusion} that have recently shown powerful capabilities in solving ill-posed inverse problems. These models are primarily designed to encode implicit prior probability distributions over data manifolds, represented as $\textstyle{\nabla_{\mathbf{x}} \log p(\mathbf{x})}$. Once trained, they can be leveraged as a chain of denoisers to produce conditional samples at inference time in a zero-shot fashion (a.k.a. \emph{plug-and-play} or \emph{training-free} approach) \cite{zhang2021plug, jalal2021robust, chung2022diffusion, wang2022zero, li2023zero}. This approach is particularly of significance in medical imaging, as measurement acquisitions can vary significantly upon such circumstances as instrumentations, scan protocols, acquisition time limit, and radiation dosage \cite{jalal2021robust, song2021solving, chung2022score}.

Top-performing methods that utilize DMs to tackle inverse problems in a zero-shot setting typically follow a three-phase progression in the iterative reverse diffusion process. Initially, they begin with an unconditional prediction, which might be either a transient noisy image \cite{song2021solving} or its denoised estimate version \cite{chung2022improving, chung2022diffusion, song2022pseudoinverse, yang2024guidance}. The subsequent phase, crucial for conditional sampling, entails guiding the initial prediction with information drawn from observed measurements. This has been accomplished via projecting images into the measurement-consistent subspaces \cite{song2021solving, lugmayr2022repaint, kawar2022denoising, wang2022zero}, approximating posterior score towards higher time-dependent likelihood \cite{chung2022diffusion, meng2022diffusion, feng2023score, fei2023generative, mardani2023variational}, and performing proximal optimization steps \cite{chung2023fast}. While the radical projection might throw the sampling trajectory off the data manifold \cite{chung2022diffusion}, and subtle score approximation may fail to generalize well to fewer timesteps \cite{song2023loss}, proximal optimization appears promising, particularly for medical imaging applications \cite{chung2023fast}. Nonetheless, the efficiency of this iterative proximal gradient-based optimization significantly diminishes in the absence of a closed-form solution \cite{chung2023fast}. Ultimately, in the third phase, the procedure progresses to the sampling, which is performed using Langevin dynamics \cite{song2020score, ho2020denoising} or more efficient samplers \cite{song2020denoising, chung2022come}.

In this paper, we introduce \textbf{B}i-level \textbf{G}uided \textbf{D}iffusion \textbf{M}odels (\textbf{BGDM}), a framework that guides the diffusion process through a \emph{bi-level} strategy. This innovative approach synergizes two distinct levels of guidance mechanisms, leveraging their unique strengths to achieve a more efficient and effective integration of measurements. To this end, we first theoretically examine the range-null space decomposition \cite{wang2022zero}, a projection-based technique, from an optimization perspective, leading us to an alternative proximal optimization objective. This \emph{outer-level} objective explicitly takes into account both data fidelity and proximity terms, where the former enforces that the reconstructed image is consistent with the acquired measurements in the transformed domains (k-space and sinograms), and the latter ensures that the solution remains close to its initial prior estimate (or reference point). Notably, this optimization problem offers a closed-form solution, which significantly increases the sampling speed compared to the previous iterative approach \cite{chung2023fast}. However, its effectiveness relies on a more accurate, and consistent reference point. To achieve this, we propose to implement an \emph{inner-level} estimate of the clean image conditioned on its noisy counterpart and the measurement. Intuitively, this \emph{bi-level} approach harmonizes the precision of the optimization objective with the robustness of the denoiser, leading to an enhanced balance between accuracy and efficiency in image reconstruction, ultimately streamlining the process while adhering to high fidelity standards (see Figure \ref{fig1}). We further propose \textbf{R}-\textbf{BGDM}, a refined version, which significantly enhances the generative quality. This is achieved by ensuring that the guided prediction does not deviate too far from the unconditional prediction on the clean manifold, thereby preserving the authenticity of the generated output. 
The contributions of our work are as follows. 
\begin{itemize}
\item \textbf{In theory}, we delve into the effective strategies tailored for addressing medical imaging inverse problems in a zero-shot setting. At the core of our approach is an assurance of data consistency achieved through analytical measures complemented by the integration of prior information extracted from pre-trained diffusion models. 
\item \textbf{In practice}, our methodology is rigorously evaluated across a spectrum of challenges, including under-sampled MRI and sparse-view CT reconstructions, as well as MRI super-resolution task. Empirical results consistently indicate that our approach surpasses the state-of-the-art (SOTA) performance of other DM-based methods, exhibiting robustness across diverse acceleration rates, projection counts, spatial resolutions, and anatomical variations (human brains, lungs, and knees). 
\end{itemize}
\section{Preliminaries}
In this section, we first review diffusion models in Subsection \ref{2.1}, and then discuss how they are used to solve inverse problems in Subsection \ref{2.2}.
\vspace{-3pt}
\subsection{Diffusion Models}\label{2.1}
A diffusion model \cite{sohl2015deep} is composed of two processes with $T$ timesteps. The first is the \textit{forward} noising process (\textit{diffusion process}), which gradually introduces Gaussian noise into the data sample $\mathbf{x}_0 \sim q(\mathbf{x}_0)$. During this procedure, a series of latent variables $\mathbf{x}_{1}, ...\mathbf{x}_{T}$ are sequentially generated, with the final one, $\mathbf{x}_T$, roughly conforming to a standard Gaussian distribution, i.e., $q(\mathbf{x}_T) \approx \mathcal{N}(\mathbf{x}_T; \mathbf{0}, \mathbf{I})$. This process is formally defined as a Markov chain
\begin{equation}\label{eq1}
\begin{aligned}
q(\mathbf{x}_{1:T} | \mathbf{x}_0) & = \prod_{t=1}^T q(\mathbf{x}_t | \mathbf{x}_{t-1}), \\
q(\mathbf{x}_t | \mathbf{x}_{t-1}) & = \mathcal{N}(\mathbf{x}_t; \sqrt{1 - \beta_t} \mathbf{x}_{t-1}, \beta_t \mathbf{I}),
\end{aligned}
\end{equation}
where $\textstyle{q(\mathbf{x}_t | \mathbf{x}_{t-1})}$ signifies the Gaussian transition kernel with a predefined variance schedule $\beta_t$. One can further compute the probabilistic distribution of $\mathbf{x}_t$  given $\mathbf{x}_0$ via reparametrization trick as $ \textstyle{q(\mathbf{x}_{t}|\mathbf{x}_{0})=\mathcal{N}(\mathbf{x}_{t}; \sqrt{\overline{\alpha}_{t}} \mathbf{x}_{0}, (1-\overline{\alpha}_{t}) \mathbf{I})}$ with $\textstyle{\alpha_{t} = 1 - \beta_{t}}$ and $\textstyle{\overline{\alpha}_{t} = \prod_{i=0}^{t} \alpha_{i}}$. Equivalently, $\mathbf{x}_t$ can be expressed as $\textstyle{\mathbf{x}_t = \sqrt{\overline{\alpha}_{t}} \mathbf{x}_0 + \boldsymbol{\sigma}_t \boldsymbol{\epsilon}}$, where $\boldsymbol{\sigma}_t = \sqrt{1-\overline{\alpha}_{t}}$  and $\textstyle{\boldsymbol{\epsilon} \sim \mathcal{N}(\mathbf{0}, \mathbf{I})}$. 
The other is the \textit{reverse} denoising process, which aims to recover the data-generating sample $\mathbf{x}_0$ by iteratively denoising the initial sample $\mathbf{x}_T$ drawn from standard Gaussian distribution $p(\mathbf{x}_T) = \mathcal{N}(\mathbf{x}_T; \mathbf{0}, \mathbf{I})$. This procedure is also  characterized by the following Markov chain: 
\begin{equation}\label{eq2}
\begin{aligned}
p_\theta(\mathbf{x}_{0:T}) &= p(\mathbf{x}_T) \prod_{t=T}^1 p_{\theta}(\mathbf{x}_{t-1} | \mathbf{x}_{t}), \\
p_\theta(\bxtt{t-1}|\bxt) &= \int_{\mathbf{x}_0} q(\bxtt{t-1} | \bxt, \bxz) p_\theta(\bxz | \bxt) d\bxz,
\end{aligned}
\end{equation}
where $\textstyle{p_{\theta}(\mathbf{x}_{t-1} | \mathbf{x}_{t})}$ is a denoising transition module with parameters $\theta$ approximating the forward posterior probability distribution $\textstyle{q(\mathbf{x}_{t-1} | \mathbf{x}_t) = q(\mathbf{x}_{t-1} | \mathbf{x}_t, \mathbf{x}_0)}$. 
The objective is to maximize the likelihood of $\textstyle{{p_{\theta}(\mathbf{x}_0)}
= {\int p_{\theta}(\mathbf{x}_{0:T}) d\mathbf{x}_{1:T}}}$. Denoising Diffusion Probabilistic Models (DDPM) \cite{ho2020denoising} assumes $\textstyle{p_{\theta}(\mathbf{x}_{t-1} | \mathbf{x}_{t}) = \mathcal{N}(\mathbf{x}_{t-1}; \bm{\mu}_{\theta}(\mathbf{x}_t, t), \bm{\sigma}_\theta(\bmx_t,t)\mathbf{I})}$ by considering $\textstyle{p_\theta(\bxz | \bxt)}$ to be a Dirac delta distribution centered at the point estimate $\textstyle{\mathbb{E}[\mathbf{x}_0|\mathbf{x}_t]}$, which is minimum mean squared error (MMSE) estimator of $\mathbf{x}_0$ given $\mathbf{x}_t$, and $\textstyle{q(\mathbf{x}_{t-1} | \mathbf{x}_t, \mathbf{x}_0)}$ to be a fixed Gaussian. Under this scheme, the training loss $\textstyle{\ell({\theta})}$ an be simply formulated as
\begin{equation}\label{eq3}
\min_{\theta} \ell(\theta) := \min_{\theta} \mathbb{E}_{t \sim (\mathbf{0}, {T}), \mathbf{x}_0 \sim q(\mathbf{x}_0), \bm{\epsilon} \sim \mathcal{N}(\mathbf{0}, \mathbf{I})} \big[ \Vert \bm{\epsilon} - \bm{\epsilon}_{\theta} (\mathbf{x}_t, t)\Vert_2^2\big].
\end{equation}
Therefore, given the trained denoising function $\textstyle{\bm{\epsilon}_{\theta}(\mathbf{x}_t, t)}$, samples can be generated using DDPM, Denoising Diffusion Implicit Models (DDIM) \cite{song2020denoising}, or other solvers \cite{lu2022dpm, zhang2022fast}. 
\vspace{-10pt}
\subsection{Diffusion Models for Linear Inverse Problems}\label{2.2}
An inverse problem seeks to estimate an unknown image $\mathbf{x}$ from partially observed, noisy measurement $\mathbf{y}$. They are generally approached by optimizing or sampling a function that combines a term for data fidelity or likelihood with a term for regularization or prior \cite{ongie2020deep}. A detailed exploration of methods for solving linear inverse problems can be found in Appendix \ref{A1}. A common method for regularization involves using pre-trained priors from generative models. Recently, pre-trained diffusion models \cite{ho2020denoising, nichol2021improved} have been leveraged as a powerful generative prior (a.k.a. denoiser), in a zero-shot fashion, to efficiently sample from the conditional posterior. Due to their unique characteristics, namely the ability to model complex distributions, the efficient iterative nature of the denoising process, and the capacity to effectively conduct conditional sampling, these models stand out as a potent solution for solving inverse problems \cite{daras2022score, rombach2022high}. A primary difficulty, however, is how to guide the unconditional prediction to conform to the measurement information in each iteration. Methods addressing this generally fall into two distinct categories as follows.
\vspace{4pt}

\textbf{Posterior Score Approximation.}\label{psa} The reverse Stochastic Differential Equation (SDE) for a conditional generation can be written as  
\begin{equation}\label{eq4}
\mathrm{d}\mathbf{x}_{t}=\big[\mathbf{f}(\mathbf{x}_{t},t)-g^{2}(t){\nabla_{\mathbf{x}_t}\log{p(\mathbf{x}_{t}|\mathbf{y})}}\big]\mathrm{d}\bar{t}+g(t)\mathrm{d}\mathbf{\bar{w}}_{t},
\end{equation}
where ${\nabla_{\mathbf{x}_t}\log{p(\mathbf{x}_{t}|\mathbf{y})}}$ is referred to as a posterior score that can be decomposed through Bayesian' rule as follows. 
\begin{equation}\label{eq5}
\nabla_{\mathbf{x}_t} \log p(\mathbf{x}_t|\mathbf{y}) = \nabla_{\mathbf{x}_t} \log p(\mathbf{x}_t) + \nabla_{\mathbf{x}_t} \log p(\mathbf{y}|\mathbf{x}_t).
\end{equation}
The composite score results from the prior score combined with the time-dependent likelihood score. While one can closely approximate the prior score with a pre-trained diffusion model, i.e., {{$\textstyle{\nabla_{\mathbf{x}_t} \log p(\mathbf{x}_t) \simeq \frac{-1}{\sqrt{1-\overline{\alpha}_t}}\boldsymbol{\epsilon}_{\mathbf{\theta}}(\mathbf{x}_t, t)}$}}, the likelihood score is analytically intractable to compute. This becomes evident when considering $\textstyle{p(\mathbf{y}|\mathbf{x}_t) = \int_{\mathbf{x}_0} p(\mathbf{y}|\mathbf{x}_0) p(\mathbf{x}_0|\mathbf{x}_t) \, d\mathbf{x}_0}$ according to the graphical inferences $\textstyle{\mathbf{x}_0 \rightarrow \mathbf{y}}$ and $\textstyle{\mathbf{x}_0 \rightarrow \mathbf{x}_t}$. The measurement models can be represented by $p(\mathbf{y}|\mathbf{x}_0)=\mathcal{N}(\mathcal{A}\mathbf{x}_0, \boldsymbol{\sigma}_{\mathbf{y}}^2)$. The intractability of $\textstyle{p(\mathbf{y}|\mathbf{x}_t)}$ arises from $\textstyle{p(\mathbf{x}_0|\mathbf{x}_t)}$. Several strategies have been proposed to approximate the likelihood term. Among the most prevalent are DPS \cite{chung2022diffusion} and \( \Pi \)GDM \cite{song2022pseudoinverse}, where point estimate \( \textstyle{p(\mathbf{x}_0 | \mathbf{x}_t) = \delta (\mathbf{x}_0 - \mathbf{x}_{0|t})} \) and Gaussian assumption \( \textstyle{p(\mathbf{x}_0 | \mathbf{x}_t) \sim \mathcal{N} (\mathbf{x}_{0|t}, \nicefrac{\boldsymbol{\sigma}_t^2}{\boldsymbol{\sigma}_t^2 +1} \ \mathbf{I})} \) are considered respectively to estimate $\textstyle{p(\mathbf{y} | \mathbf{x}_t)}$. The term $\textstyle{\mathbf{x}_{0|t}}$ is posterior mean (or denoised estimate) of $\textstyle{\mathbf{x}_{0}}$ conditioned on $\textstyle{\mathbf{x}_{t}}$, defined as $\textstyle{\mathbf{x}_{0|t}:= \mathbb{E}[\mathbf{x}_0|\mathbf{x}_t]} = \mathbb{E}_{\mathbf{x}_0 \sim p(\mathbf{x}_0|\mathbf{x}_t)}[\mathbf{x}_0]$. As a result, the likelihood score can be reformulated as
\vspace{-2pt}
\begin{align}\label{eq6}
\nabla_{\mathbf{x}_t} \log p(\mathbf{y}|\mathbf{x}_t) \simeq  \underbrace{\frac{\partial \ (\mathbf{x}_{0|t})}{\partial \mathbf{x}_t}}_{{\rm J}}\underbrace{\mathcal{H}(\mathbf{y} - \mathcal{A} \ {\mathbf{x}_{0|t}})}_{{\rm V}},
\end{align}
\vspace{-2pt}
which is essentially a Vector ($\rm V$)-Jacobian ($\rm J$) Product that enforces consistency between the denoising result and the measurements, with \(\textstyle{\mathcal{H}}\) corresponding to \(\textstyle{\mathcal{A}^{\top}}\) in DPS and to \(\textstyle{\mathcal{A}^{\dagger}}\) (the Moore-Penrose pseudoinverse of $\textstyle{\mathcal{A}}$) in $\Pi$GDM. These methods suffer from inconsistent reliability, sometimes delivering excellent results and at other times failing to create realistic images. They are also notably slow due to their dependence on a high number of diffusion steps for satisfactory outcomes. In the context of MRI reconstruction in medical imaging, DPS leads to noisy outputs \cite{chung2023fast}. More recently, variational posterior approximation has been proposed \cite{mardani2023variational}, yet it requires computationally expensive test-time optimization.
\vspace{4pt}

\textbf{Decomposition / Projection.} 
Denoising Diffusion Restoration Model (DDRM) \cite{kawar2022denoising} attempted to solve inverse problems in a zero-shot way using singular value decomposition (SVD) of $\mathcal{A}$. However, for medical imaging applications with complex measurement operators, the SVD decomposition can be prohibitive \cite{chung2023fast}. In ScoreMed \cite{song2021solving}, an alternative decomposition method for $\mathcal{A}$ within the sampling process is proposed, suitable for medical imaging, assuming that $\mathcal{A}$ is of full rank. Denoising Diffusion Null-Space Models (DDNM) \cite{wang2022zero} introduces a range-null space decomposition for zero-shot image reconstruction, where the range space ensures data consistency, and the null space enhances realism. Both ScoreMed and DDNM essentially use back-projection tricks \cite{tirer2020back} to meet the measurement consistency in a non-noisy measurement scenario, which can be expressed as:
\vspace{-2pt}
\begin{equation}\label{eq7}
\hat{\mathbf{x}}_{t} =\sqrt{\overline{\alpha}_{t}} \big( {{\mathcal{A}^{\dagger}\mathbf{y}}+{(\mathbf{I}-\mathcal{A}^{\dagger}\mathcal{A})}{\mathbf{x}_{0|t}}}\big)+ \boldsymbol{\sigma}_t\boldsymbol{\epsilon},
\end{equation}
where the extra noise $\boldsymbol{\sigma}_t\boldsymbol{\epsilon}$ is excluded in DDNM, yielding a higher performance. However, these projection-based methods frequently encounter challenges in maintaining the sample's realness, as the projection might shift the sample path away from the data manifold \cite{chung2022improving}.

\section{Proposed Method}
\subsection{Bi-level Guided Diffusion Model (BGDM)}
We motivate our approach by highlighting two critical drawbacks inherent in projection-based methods, especially in DDNM, which utilizes the range-null space decomposition to construct a general solution $\hat{\mathbf{x}}$ as 
\begin{equation}\label{eq8}
\hat{\mathbf{x}} = {\mathcal{A}^{\dagger}\mathbf{y}}+{(\mathbf{I}-\mathcal{A}^{\dagger}\mathcal{A})}{\bar{\mathbf{x}}},
\end{equation}
where ${\bar{\mathbf{x}}}$ can be chosen arbitrarily from $\mathbb{C}^{n}$ without affecting the consistency. The foundational interplay between these spaces is evident: the range space, represented by \( \mathcal{A}^{\dagger}\mathbf{y} \), embodies the solution components originating from observations, whereas the null space, denoted by \( (\mathbf{I}- \mathcal{A}^{\dagger}\mathcal{A})\bar{\mathbf{x}} \), encompasses the solution's unobserved elements. We illuminate a new interpretation of this decomposition from an optimization perspective in the following proposition.
\begin{Proposition}\label{prop:GD}
Consider the least squares problem $\min_{\mathbf{x} \in \mathbb{R}^{n}} \Vert \mathbf{y} -\mathcal{A}\mathbf{x}  \Vert_2^2$ where $\mathcal{A} \in \mathbb{R}^{m \times n}$ is any matrix and $\mathbf{y} \in \mathbb{R}^m$. Gradient descent, initialized at ${\bar{\mathbf{x}}} \in \mathbb{R}^{n}$ and with small enough learning rate, converges to $\hat{\mathbf{x}} = {\mathcal{A}^{\dagger}\mathbf{y}}+{(\mathbf{I}-\mathcal{A}^{\dagger}\mathcal{A})\bar{\mathbf{x}}}$.
\end{Proposition}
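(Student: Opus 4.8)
The plan is to read gradient descent on $f(\mathbf{x}) = \Vert \mathbf{y} - \mathcal{A}\mathbf{x}\Vert_2^2$ as an affine fixed-point iteration and then track its two components on the orthogonal subspaces $\mathrm{range}(\mathcal{A}^\top)$ and $\ker(\mathcal{A})$ separately. First I would compute $\nabla f(\mathbf{x}) = 2\mathcal{A}^\top(\mathcal{A}\mathbf{x} - \mathbf{y})$, so that with step size $\eta$ the iteration reads $\mathbf{x}^{(k+1)} = (\mathbf{I} - 2\eta\mathcal{A}^\top\mathcal{A})\mathbf{x}^{(k)} + 2\eta\mathcal{A}^\top\mathbf{y}$, an affine recursion driven by the symmetric positive semidefinite matrix $M := \mathcal{A}^\top\mathcal{A}$, with $\mathbf{x}^{(0)} = \bar{\mathbf{x}}$.

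Next I would invoke the orthogonal decomposition $\mathbb{R}^n = \mathrm{range}(M) \oplus \ker(M)$ together with the standard identities $\mathrm{range}(M) = \mathrm{range}(\mathcal{A}^\top)$ and $\ker(M) = \ker(\mathcal{A})$. Since $M$ is symmetric, both subspaces are invariant under $\mathbf{I} - 2\eta M$, and since $\mathcal{A}^\top\mathbf{y} \in \mathrm{range}(\mathcal{A}^\top) = \mathrm{range}(M)$, the forcing term of the recursion lies entirely in the range component. Writing $\mathbf{x}^{(k)} = \mathbf{u}^{(k)} + \mathbf{v}^{(k)}$ with $\mathbf{u}^{(k)} \in \mathrm{range}(M)$ and $\mathbf{v}^{(k)} \in \ker(M)$, the recursion decouples: on the null space $\mathbf{v}^{(k+1)} = \mathbf{v}^{(k)}$, so the null component is frozen at $\mathbf{v}^{(0)} = (\mathbf{I} - \mathcal{A}^\dagger\mathcal{A})\bar{\mathbf{x}}$, using that $\mathcal{A}^\dagger\mathcal{A}$ is the orthogonal projector onto $\ker(\mathcal{A})^\perp = \mathrm{range}(\mathcal{A}^\top)$; on the range space, $M$ is positive definite with eigenvalues in $(0,\lambda_{\max}]$, so for $0 < \eta < 1/\lambda_{\max}(M) = 1/\Vert\mathcal{A}\Vert_2^2$ the map $\mathbf{u} \mapsto (\mathbf{I} - 2\eta M)\mathbf{u} + 2\eta\mathcal{A}^\top\mathbf{y}$ is a contraction on $\mathrm{range}(M)$, whose unique fixed point is the solution of the normal equations $M\mathbf{u} = \mathcal{A}^\top\mathbf{y}$ that lies in $\mathrm{range}(M)$.

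I would then identify this fixed point as $\mathbf{u}^\star = (\mathcal{A}^\top\mathcal{A})^\dagger\mathcal{A}^\top\mathbf{y} = \mathcal{A}^\dagger\mathbf{y}$, via the pseudoinverse identity $\mathcal{A}^\dagger = (\mathcal{A}^\top\mathcal{A})^\dagger\mathcal{A}^\top$. Combining the two components, $\mathbf{x}^{(k)} \to \mathcal{A}^\dagger\mathbf{y} + (\mathbf{I} - \mathcal{A}^\dagger\mathcal{A})\bar{\mathbf{x}} = \hat{\mathbf{x}}$, which is exactly the range-null decomposition in \eqref{eq8}; the convergence is in fact geometric, with rate $\max\{|1 - 2\eta\lambda_{\min}^{+}|,\, |1 - 2\eta\lambda_{\max}|\}$, where $\lambda_{\min}^{+}$ is the smallest positive eigenvalue of $M$.

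There is no deep difficulty here — the whole argument is a routine application of the spectral theorem — but the step demanding the most care is the bookkeeping that makes the limit land precisely on $\hat{\mathbf{x}}$: namely (i) that the forcing term $\mathcal{A}^\top\mathbf{y}$ has no component in $\ker(\mathcal{A})$, so the null-space part of the iterate never moves from its initialization, and (ii) that the projector identity $\mathbf{I} - \mathcal{A}^\dagger\mathcal{A} = P_{\ker(\mathcal{A})}$ and $\mathcal{A}^\dagger = (\mathcal{A}^\top\mathcal{A})^\dagger\mathcal{A}^\top$ align the two limiting components with $\mathcal{A}^\dagger\mathbf{y}$ and $(\mathbf{I} - \mathcal{A}^\dagger\mathcal{A})\bar{\mathbf{x}}$ respectively. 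One should also make the ``small enough learning rate'' quantitative (e.g. $0 < \eta < 1/\Vert\mathcal{A}\Vert_2^2$), since this threshold is precisely what turns the range-space iteration from merely non-expansive into a genuine contraction.
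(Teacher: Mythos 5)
Your proof is correct, and it takes a genuinely different route from the one in the paper. The paper works with the residual $\mathbf{r}^{(t)} = \mathbf{y} - \mathcal{A}\mathbf{x}^{(t)}$, shows $\mathbf{r}^{(t+1)} = (\mathbf{I} - \alpha \mathcal{A}\mathcal{A}^{\top})^{t+1}\mathbf{r}^{(0)}$, unrolls the iteration into the partial sum $\mathbf{x}^{(0)} + \alpha \sum_{i=0}^{t}(\mathbf{I}-\alpha\mathcal{A}^{\top}\mathcal{A})^{i}\mathcal{A}^{\top}\mathbf{r}^{(0)}$, and then invokes a classical result (Theorem~16 of Ben-Israel and Charnes) stating that for $0<\alpha<1/\Vert\mathcal{A}\Vert^{2}$ this series converges to $\mathcal{A}^{\dagger}\mathbf{r}^{(0)}$, so the limit is $\mathbf{x}^{(0)}+\mathcal{A}^{\dagger}(\mathbf{y}-\mathcal{A}\mathbf{x}^{(0)})$, which is algebraically the same as $\hat{\mathbf{x}}$. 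You instead split $\mathbb{R}^{n}=\mathrm{range}(\mathcal{A}^{\top})\oplus\ker(\mathcal{A})$, observe that the null-space component is frozen at $(\mathbf{I}-\mathcal{A}^{\dagger}\mathcal{A})\bar{\mathbf{x}}$ because the forcing term $\mathcal{A}^{\top}\mathbf{y}$ has no kernel component, and treat the range component as a contraction whose fixed point is $(\mathcal{A}^{\top}\mathcal{A})^{\dagger}\mathcal{A}^{\top}\mathbf{y}=\mathcal{A}^{\dagger}\mathbf{y}$. What your approach buys is self-containedness (no appeal to the external series representation of the pseudoinverse) plus an explicit geometric rate and a transparent explanation of \emph{why} the limit splits into the range and null pieces of \eqref{eq8}; what the paper's approach buys is brevity, since the residual recursion plus the cited limit identity settles everything in a few lines. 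Both arguments hinge on the same step-size condition (your $0<\eta<1/\Vert\mathcal{A}\Vert_2^{2}$ with the factor-$2$ gradient matches the paper's $0<\alpha<1/\Vert\mathcal{A}\Vert^{2}$ up to the absorbed constant), so there is no gap either way.
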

\begin{proof}
Consider an iteration of gradient descent, initialized from $\mathbf{x}^{(0)}$, on the least squares problem
\begin{align*}
\mathbf{x}^{(t+1)} &= \mathbf{x}^{(t)} + \alpha \mathcal{A}^T(\mathbf{y} -\mathcal{A}\mathbf{x}^{(t)}).
\end{align*}
Defining $\mathbf{r}^{(t)} = \mathbf{y} -\mathcal{A}\mathbf{x}^{(t)}$, it follows that
\begin{align*}
\mathbf{r}^{(t+1)} &= \left(\mathbf{I} - \alpha \mathcal{A} \mathcal{A}^T \right)\mathbf{r}^{(t)} = \ldots = \left(\mathbf{I} - \alpha \mathcal{A} \mathcal{A}^T \right)^{t+1}\mathbf{r}^{(0)}.
\end{align*}
Hence,
\begin{align*}
\mathbf{x}^{(t+1)} &= \mathbf{x}^{(t)} + \alpha \mathcal{A}^T\left(\mathbf{I} - \alpha \mathcal{A} \mathcal{A}^T \right)^{t}\mathbf{r}^{(0)} \\
&= \mathbf{x}^{(0)} + \alpha \mathcal{A}^T \sum_{i=0}^{t} \left(\mathbf{I} - \alpha \mathcal{A} \mathcal{A}^T \right)^{i}\mathbf{r}^{(0)} \\
&= \mathbf{x}^{(0)} + \alpha \sum_{i=0}^{t} \left(\mathbf{I} - \alpha \mathcal{A}^T \mathcal{A} \right)^{i} \mathcal{A}^T \mathbf{r}^{(0)}.
\end{align*}
Subsequently, as long as $0 < \alpha < 1/\|\mathcal{A}\|^{2}$, from \cite[Theorem 16]{ben1963contributions}, we get
\begin{equation}
\begin{aligned}
\lim_{t \to \infty} \mathbf{x}^{(t)} &= \mathbf{x}^{(0)} + \alpha \sum_{i=0}^{\infty} \left(\mathbf{I} - \alpha \mathcal{A}^T \mathcal{A} \right)^{i} \mathcal{A}^T \mathbf{r}^{(0)} \\
&= \mathbf{x}^{(0)}  + \mathcal{A}^{\dagger} \mathbf{r}^{(0)}.
\end{aligned}
\end{equation}
This concludes the proof. 
\end{proof}
Proposition \ref{prop:GD} highlights the behavior of gradient descent on a least squares problem when initiated from any initial point, in particular $\bar{\mathbf{x}} = \mathbf{x}_{0|t}$. The solution, upon convergence, can be expressed as
\begin{equation}\label{eq9}
\hat{\mathbf{x}}_{0|t,\mathbf{y}}:=\hat{\mathbf{x}} = \mathbf{x}_{0|t} + \mathcal{A}^\dagger(\mathbf{y}-\mathcal{A}\mathbf{x}_{0|t}).
\end{equation}
Here, the term $\mathcal{A}^\dagger(\mathbf{y}-\mathcal{A}\mathbf{x}_{0|t})$ represents the correction applied to the initial prediction, factoring in the difference between predicted and observed measurements. However, this method is not devoid of challenges. Primarily, the correction term, solely determined by $(\mathbf{y}-\mathcal{A}\mathbf{x}_{0|t})$, can be significantly affected if $\mathbf{y}$ is noisy, potentially leading our estimates astray. Furthermore, considering the proposition outlined, relying exclusively on gradient-based correction for adjusting the initial estimate, particularly when  $\mathbf{x}_{0|t}$ hold uncertainties, can predispose the optimization towards suboptimal solutions. These points highlight the need for a more reliable optimization path. Hence, we define the decomposition \eqref{eq9} explicitly by embedding a regularization term into our optimization objective, acting as a penalty against large deviations from our initial estimate. This results in the following \emph{outer-level} proximal optimization objective:
\begin{equation}\label{eq10}
\hat{\mathbf{x}}_{0|t,\mathbf{y}} = \underset{\mathbf{x}}{\operatorname{arg\, min}} \ \frac{1}{2} \underbrace{\Vert\mathbf{y -\mathcal{A}\mathbf{x}}  \Vert_{2}^{2}}_{\text{term (i)}} +\frac{{\color{orange}\lambda}}{2}  \underbrace{\Vert\mathbf{x}-{\mathbf{x}_{0|t}}\Vert_{2}^{2}}_{\text{term (ii)}},
\end{equation}
\begin{figure*}[!t]
\centering
\begin{adjustbox}{width=0.96\linewidth, height=4.5cm}
\begin{tikzpicture}
\draw (0, 0) node[inner sep=0] {\includegraphics[width=0.96\textwidth]{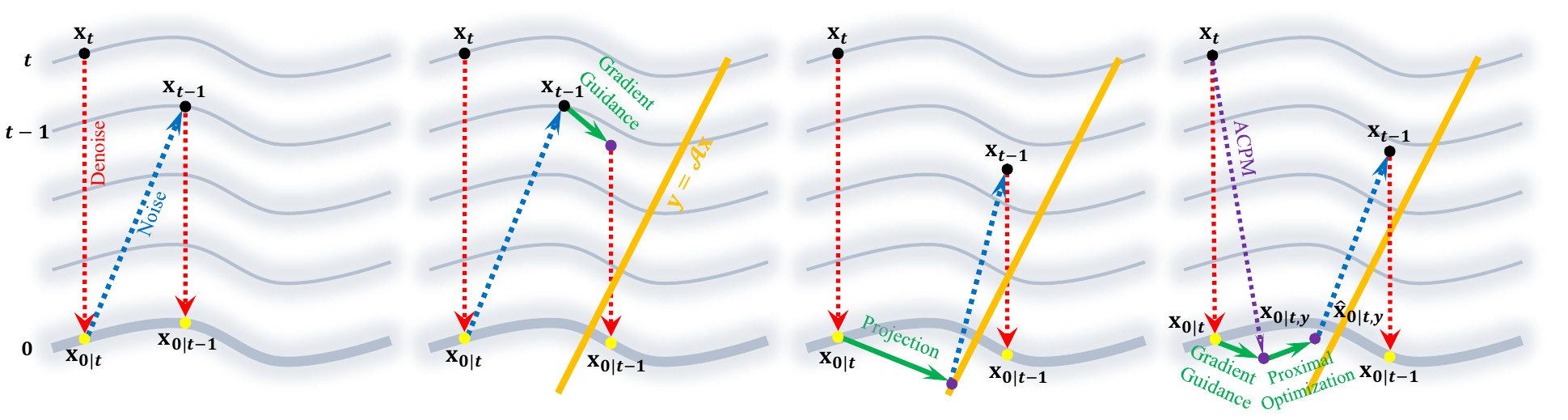}};
\draw (-6.3, 2.3) node {\fontsize{8pt}{9pt}\selectfont (a) DDIM \fontsize{5pt}{3pt}\selectfont \cite{song2020denoising}};
% Adjust the font size for DPS label
\draw (-2, 2.3) node {\fontsize{8pt}{9pt}\selectfont (b) DPS \fontsize{5pt}{3pt}\selectfont \cite{chung2022diffusion}};
% Adjust the font size for DDNM label
\draw (2, 2.3) node {\fontsize{8pt}{9pt}\selectfont (c) DDNM \fontsize{5pt}{3pt}\selectfont \cite{wang2022zero}};
\draw (6.3, 2.3) node {\fontsize{8pt}{9pt}\selectfont (d) BGDM (\textbf{ours})};
\end{tikzpicture}
\end{adjustbox}
\caption{An illustration of the geometric principles underpinning diffusion samplers and various guidance schemes. (a) DDIM is an unconditional diffusion sampler devoid of guidance. (b) DPS employs gradient guidance with possible deviation from the accurate manifold. (c) DDNM projects denoised samples into a measurement-consistent subspace. (d) our proposed method employs a \emph{bi-level} guidance strategy; the \emph{inner level} approximates the initial prediction with a conditional posterior mean through gradient guidance, while the \emph{outer level} tackles an optimization problem to impose measurement consistency further. Note that ACPM stands for Approximated Conditional Posterior Mean derived in \eqref{eqnew2}.}
\label{fig2}
\vspace{-14pt}
\end{figure*}
where the fidelity term (i) aims to minimize the discrepancy between the predicted and observed measurements, while the proximity term (ii) penalizes deviations from the initial estimate. The regularization parameter ${\color{orange}\lambda}$ offers a balance between fidelity and proximity, ensuring our new estimate aligns with observations while respecting our initial belief encapsulated in $\mathbf{x}_{0|t}$. Note that the solution $\hat{\mathbf{x}}_{0|t,\mathbf{y}}$ can be regarded as a Maximum a Posteriori (MAP) estimate of a Bayesian posterior with a Gaussian prior centered around $\mathbf{x}_{0|t}$. Typically, this solution is available in a closed form. For MRI reconstruction, the details can be found in appendix \ref{A3}.
\begin{figure}[!t]
\begin{algorithm}[H]
\scriptsize
\caption{BGDM Sampling}\label{alg1}
\begin{algorithmic}[1] 
\Require {The measurement $\mathbf{y}$, the forward operator $\mathcal{A}$, and pre-trained denoiser $\boldsymbol{\epsilon}_{\mathbf{\theta}}(\cdot,\cdot)$}
\State $\mathbf{x}_T \sim \mathcal{N} (0, \mathbf{I})$ 
\For {$t=T, \ldots , 1$}
\State $\overline{\alpha}_{t-1} \gets {1-\boldsymbol{\sigma}_t^2}$ \Comment{Get $\overline{\alpha}_{t-1}$ in VPSDE}
\State $c_1 \gets {\eta} \sqrt{1-\overline{\alpha}_{t-1}}$ 
\Comment{Get coefficients $c_1$ and $c_2$ in DDIM}
\State $c_2 \gets  {\sqrt{1-{\overline{\alpha}_{t-1}}-c_1^2}}$
\State $\boldsymbol{\epsilon} \sim \mathcal{N} (0, \mathbf{I})$ \textbf{if} $t > 0$, \textbf{else} $\boldsymbol{\epsilon}=0$\Comment{Sample \emph{i.i.d.} Gaussian}
\State $\mathbf{x}_{0|t} \gets \frac{1}{\sqrt{\overline{\alpha}}_{t}}\left( \mathbf{x}_{t} - \sqrt{1-\overline{\alpha}_{t}} \boldsymbol{\epsilon}_{\mathbf{\theta}}(\mathbf{x}_{t},t) \right)$\Comment{Predict one-step MMSE denoiser result, \eqref{eq11}}
\State ${\mathbf{x}}_{0|t,\mathbf{y}} \gets \mathbf{x}_{0|t} - {\color{orange}\zeta} \nabla_{\mathbf{x}_t} \Vert\mathbf{y} -\mathcal{A}\mathbf{x}_{0|t}\Vert_2^2$\label{step8}
\Comment{The \emph{inner-level} guidance \eqref{eqnew2}}
\State $\hat{\mathbf{x}}_{0|t,\mathbf{y}} \gets \underset{\mathbf{x}}{\operatorname{arg\, min}} \big\{ {\frac{1}{2}\Vert\mathbf{y}-\mathcal{A}\mathbf{x}\Vert_{2}^{2}+\frac{{\color{orange}\lambda}}{2} \Vert\mathbf{x}-{\mathbf{x}}_{0|t,\mathbf{y}} \Vert_{2}^{2}}\big\}$ \label{step9}
\Comment{\emph{Outer-level} objective \eqref{eqnew}}
\State $\mathbf{x}_{t-1} \gets {\sqrt{\overline{\alpha}_{t-1}}}\hat{\mathbf{x}}_{0|t,\mathbf{y}} + (c_1 \boldsymbol{\epsilon} +c_2 \boldsymbol{\epsilon}_{\mathbf{\theta}}(\mathbf{x}_t, t))$
\Comment{BGDM update based on DDIM sampling}
\EndFor
\State \textbf{return} $\mathbf{x}_{0}$
\end{algorithmic} 
\end{algorithm}
\vspace{-14pt}
\end{figure}
Secondly, as previously noted, different choices of $\bar{\mathbf{x}}$ in \eqref{eq8} result in estimates that are all equally consistent, and the choice of $\mathbf{x}_{0|t}$ represents just one specific solution among the possibilities. We postulate that the chosen $\bar{\mathbf{x}}$ can profoundly influence the trajectory of the projections. By strategically choosing $\bar{\mathbf{x}}$ in accordance with the actual measurement $\mathbf{y}$, we can make our solutions more efficient and accurate, yet ensuring that they respect the desired data distribution $q(\mathbf{x})$. In a similar reasoning, the effectiveness of the {proximity} {term} in \eqref{eq10} highly relies on the quality of the {prior} $\mathbf{x}_{0|t}$. If the {prior} is not a desirable {estimate}, it might mislead the {optimization}. To identify a solution, we return to the posterior mean of $\mathbf{x}_0$ given $\mathbf{x}_t$ discussed in Section \ref{psa}. For Variance Preserving  SDE (VPSDEs), the posterior mean is driven based on Tweedie's formula as
\begin{equation}\label{eq11}
{\mathbf{x}_{0|t}} =\mathbb{E}[{{\mathbf{x}_0|\mathbf{x}_t}}]  =\frac{1}{{\sqrt{\overline{\alpha}_{t}}}}\big(\mathbf{x}_{t}+({1-\overline{\alpha}_{t}}){\nabla_{{{\mathbf{x}_t}}}\log p({\mathbf{x}_t})}\big),
\end{equation}
which is further extended in \cite{ravula2023optimizing} with an additional measurement $\mathbf{y}$ for Variance Exploding SDE (VESDEs). The updated formula for the conditional posterior mean in VPSDEs (see Appendix \ref{A41} ), can be presented as 
\begin{equation}\label{eq12}
{\mathbf{x}}_{0|t,\mathbf{y}}=\mathbb{E}[{\mathbf{x}_0|\mathbf{x}_t, \mathbf{y}}] = \mathbb{E}[{{\mathbf{x}_0|\mathbf{x}_t}}] + \frac{1-\overline{\alpha}_{t}}{{\sqrt{\overline{\alpha}_{t}}}} {\nabla_{{{\mathbf{x}_t}}}\log p({\mathbf{y}}|\mathbf{x}_t)}.
\end{equation}
This conditional estimation for the initial unconditional prediction functions as an \emph{inner-level} guidance. Now, it becomes clear that by integrating the prior score with the likelihood score, we can procure a consistent estimate of reference points than by solely relying on the prior score. Note that the \emph{inner-level} guidance can also be interpreted as an optimization procedure via gradient descent aimed at minimizing the guidance loss $\mathcal{L}_{t} = -\log p(\mathbf{y}|\mathbf{x}_t)$ in the vicinity of the denoised sample $\mathbf{x}_{t}$. Due to access to only $\mathcal{L}_0$, we follow DPS \cite{chung2022diffusion} which uses a clean data estimation $\mathbf{x}_{0|t}$ as a point estimate of the true loss term, i.e., $\textstyle{\mathcal{L}_{t} =- \log p(\mathbf{y}|\mathbf{x}_{0|t})}$. If the measurement noise is Gaussian, i.e., \(\textstyle{\mathbf{y} \sim \mathcal{N}(\mathbf{y}; \mathcal{A}(\mathbf{x}_0), \boldsymbol{\sigma}^2_{\mathbf{y}} \mathbf{I})}\), we then have $\textstyle{ \nabla_{\mathbf{x}_t} \mathcal{L}_{t} \approx  \frac{1}{\boldsymbol{\sigma}^2_{\mathbf{y}}} \nabla_{\mathbf{x}_t} \|\mathbf{y}-\mathcal{A}(\mathbf{x}_{0|t}) \|_2^2}$. In practice, it is assumed that $\textstyle{p(\mathbf{y}|\mathbf{x}_{0|t}) \sim \mathcal{N} (\mathbf{y}; \mathcal{A}\mathbf{x}_{0|t}, \boldsymbol{\sigma}_t^2\mathbf{I})}$.
Based on DPS's result, an approximation of the expectation in \eqref{eq12} can be established (see Appendix \ref{A42}) as
\vspace{-4pt}
\begin{equation}\label{eqnew2}
{\mathbf{x}}_{0|t,\mathbf{y}} \simeq \frac{1}{{\sqrt{\overline{\alpha}_{t}}}}  \big[\mathbf{x}_t - \sqrt{1-\overline{\alpha}_{t}} \boldsymbol{\epsilon}_{\mathbf{\theta}}(\mathbf{x}_t, t) - 
{\color{orange}\zeta} \nabla_{\mathbf{x}_t}\Vert \mathbf{y} - \mathcal{A}\mathbf{x}_{0|t}\Vert_2^2
\big],
\end{equation}
where ${\color{orange}\zeta}$ is a likelihood step size. Therefore, \eqref{eq10} will be redefined as 
\begin{equation}\label{eqnew}
\hat{\mathbf{x}}_{0|t,\mathbf{y}} = \underset{\mathbf{x}}{\operatorname{arg\, min}} \ \frac{1}{2} {\Vert\mathbf{y -\mathcal{A}\mathbf{x}}  \Vert_{2}^{2}}+\frac{{\color{orange}\lambda}}{2}  {\Vert\mathbf{x}-{\mathbf{x}_{0|t, \mathbf{y}}}\Vert_{2}^{2}}.
\end{equation}
Combining \eqref{eqnew2} and \eqref{eqnew}, we call our method, \textbf{B}i-level \textbf{G}uided \textbf{D}iffusion \textbf{M}odels (\textbf{BGDM}). 
For sampling $\mathbf{x}_{t-1}$, we employ DDIM \cite{song2020denoising}, an accelerated diffusion sampling method, which transitions the stochastic ancestral sampling of DDPM to deterministic sampling, thereby expediting the sampling process.

From the discussion presented above, we summarized the steps of our proposed method in Algorithm \ref{alg1}. We also provided a schematic illustration of the geometrical differences between our BGDM guidance strategy and other SOTA guidance techniques in Figure \ref{fig2}.
\vspace{-6pt}
\begin{figure*}[!t]
\centering    
\includegraphics[width=1\linewidth, height=0.16\textheight]{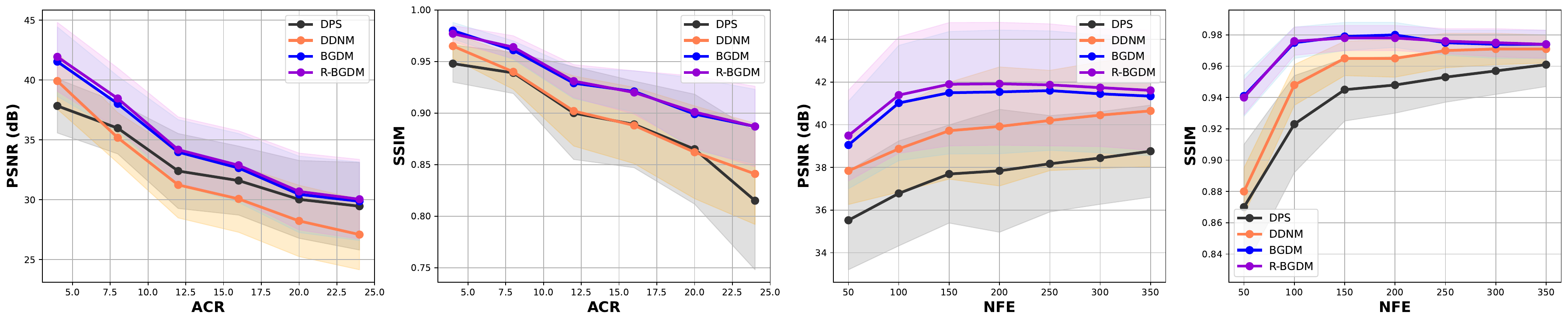}
\captionsetup{font=small, aboveskip=0pt, belowskip=0pt}
\caption{In the horizontal array of graphs from left to right, the first two graphs illustrate the BraTs undersampled MRI reconstruction results for 200 timesteps at various acceleration rates, and the last two graphs display the results over a span of 350 timesteps at a fixed ACR of 4.}
\label{fig3}
\vspace{-4pt}
\end{figure*}
\begin{figure*}[ht]
\begin{adjustbox}{width=1\linewidth}
\begin{tikzpicture}
\draw (0, 0) node[inner sep=0] {\includegraphics[width=1\textwidth]{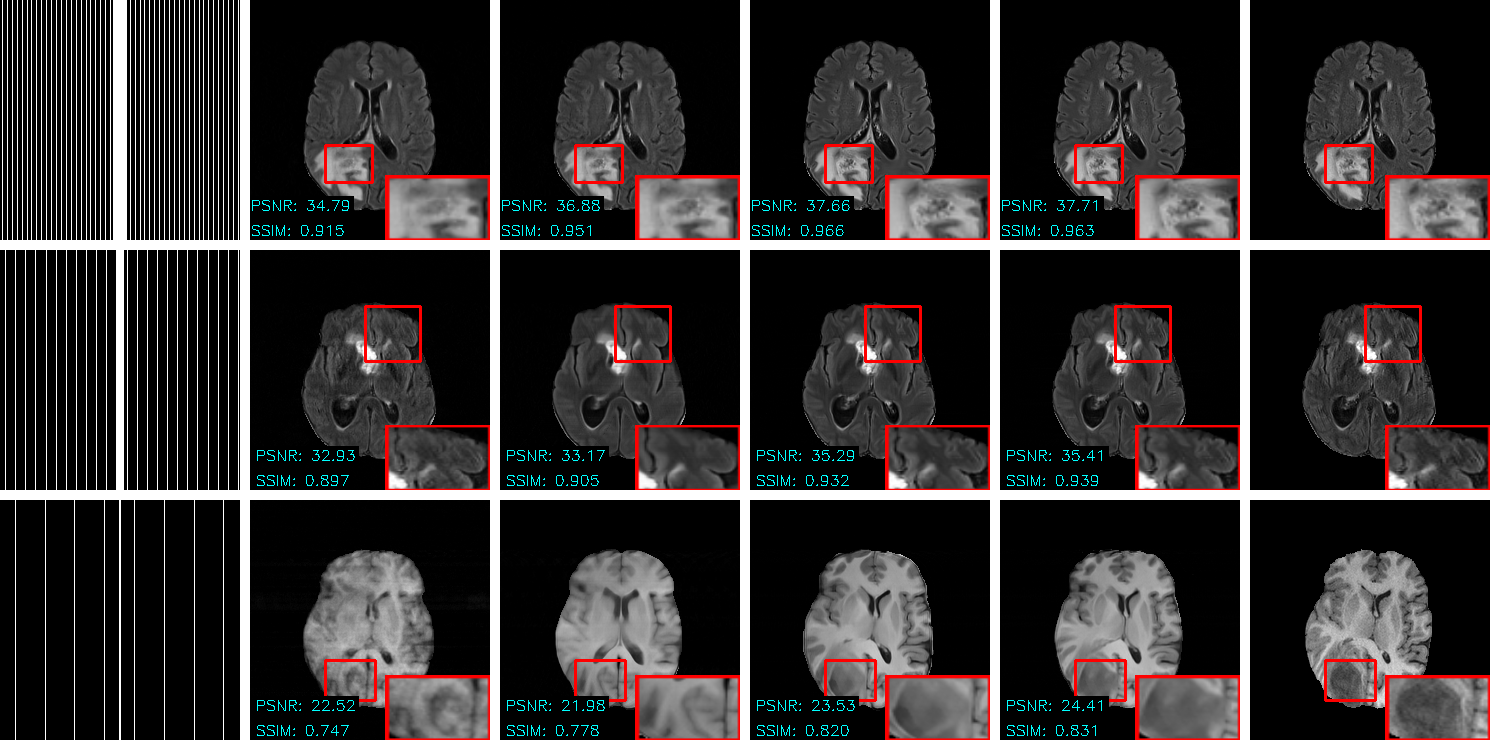}};
\draw (-7.5, 4.7) node {\fontsize{8pt}{9pt}\selectfont  Masks};
\draw (-4.5, 4.7) node {\fontsize{8pt}{9pt}\selectfont DPS}; 
\draw (-1.5, 4.7) node {\fontsize{8pt}{9pt}\selectfont DDNM};
\draw (1.5, 4.7) node {\fontsize{8pt}{9pt}\selectfont {BGDM} (\textbf{ours})};
\draw (4.5, 4.7) node {\fontsize{8pt}{9pt}\selectfont {R-BGDM} (\textbf{ours})};
\draw (7.5, 4.7) node {\fontsize{8pt}{9pt}\selectfont References};
% Row labels
\draw (-9.3, 3) node[rotate=90] {\fontsize{8pt}{9pt}\selectfont ACR= 4};
\draw (-9.3, 0) node[rotate=90] {\fontsize{8pt}{9pt}\selectfont ACR= 8};
\draw (-9.3, -3) node[rotate=90] {\fontsize{8pt}{9pt}\selectfont ACR= 24};
\end{tikzpicture}
\end{adjustbox}
\caption{The qualitative results of undersampled MRI reconstruction on the BraTS dataset, depicted for ACR 4, 8, and 24.}
\label{fig4}
\vspace{-14pt}
\end{figure*}
\subsection{Refined Version: R-BGDM}
While \emph{bi-level} guidance strategy provides enhanced efficiency and performance compared to standard baselines, it remains susceptible to manifold deviation issues due to its dependence on gradient guidance, generally speaking, loss guidance approximation \cite{he2023manifold, bansal2023universal, yu2023freedom}. To mitigate this issue, which has been empirically \cite{he2023manifold} and theoretically  \cite{yang2024guidance} verified, Manifold Preserving Guided Diffusion \cite{he2023manifold}, along with other works \cite{song2023solving, rout2024solving}, addresses the inverse problem in the latent space. However, this approach requires the training of an additional auto-encoder. Alternatively, we propose to further impose constraints on the guided prediction by incorporating a `refinement gradient' term. This term is applied at the \emph{outer-level} of guidance, ensuring that the guided predictions remain close to the initial, unconditioned ones, which is on the clean manifold. This strategic balance appears to play a significant role in maintaining the samples' authenticity and consistency. To formally introduce this concept, we consider the term as  $\Vert\hat{\mathbf{x}}_{0|t,\mathbf{y}} - {{\mathbf{x}}_{0|t}}\Vert_2^2$, which aims to measure the discrepancy between the \emph{outer-level} outputs and the original predictions. This discrepancy can be used as an additional piece of information to boost the model's ability to generate realistic images. To seamlessly integrate this insight into the BGDM framework, we can further update the \emph{outer-level} objective (step \ref{step9} of Alg \ref{alg1}) using the following formulation:
\begin{equation}\label{eq1515}\hat{\mathbf{x}}_{0|t,\mathbf{y}} = \hat{\mathbf{x}}_{0|t,\mathbf{y}} -{\color{orange}\gamma} \nabla_{\mathbf{x}_{0|t}} \Vert \hat{\mathbf{x}}_{0|t,\mathbf{y}} - {{\mathbf{x}}_{0|t}}\Vert_2^2, \end{equation} where ${\color{orange}\gamma}$ denotes the intensity of the `refinement gradient' adjustment. This modification is pivotal in refining the BGDM framework to yield more authentic and aligned samples. It should be noted that in \eqref{eq1515}, the gradient is computed with respect to ${\mathbf{x}}_{0|t}$ on clean manifold, rather than ${\mathbf{x}}_{t}$.
\section{Experiments}
In this section, we initially present the experimental setup and the implementation details of our model. Subsequently, we provide the results, where we quantitatively and qualitatively compare our model with SOTA methods. The ablation study is discussed in the final subsection. 
\vspace{-4pt}
\begin{table*}[t!]
\small
\centering
\caption{Quantitative results for the fastMRI knee dataset, examining performance across various mask types and acceleration rates.}
\begin{adjustbox}{max width=\textwidth}
\begin{tabular}{c|cc|cc|cc|cc}
\hline
\multirow{2}{*}{Method} & \multicolumn{4}{c|}{\textbf{Uniform1D}} & \multicolumn{4}{c}{\textbf{Gaussian1D}} \\ \cline{2-9} 
 & \multicolumn{2}{c|}{4$\times$ ACR} & \multicolumn{2}{c|}{8$\times$ ACR} & \multicolumn{2}{c|}{4$\times$ ACR} & \multicolumn{2}{c}{8$\times$ ACR} \\ \cline{2-9} 
 & PSNR$\uparrow$         & SSIM$\uparrow$         & PSNR$\uparrow$         & SSIM$\uparrow$         & PSNR$\uparrow$         & SSIM$\uparrow$         & PSNR$\uparrow$         & SSIM$\uparrow$        \\ \hline
DPS \cite{chung2022diffusion} & 32.40$\pm$2.19            & 0.843$\pm$0.063        &31.07$\pm$2.32              &  0.804$\pm$ 0.073             &34.93$\pm$1.90             &0.882$\pm$0.063             &33.72$\pm$1.97              &0.853$\pm$0.071            \\
DDNM \cite{wang2022zero}& 33.66$\pm$2.59             &0.857$\pm$0.048            &32.01$\pm$2.84             &0.821$\pm$0.062           &37.51$\pm$2.33              &0.899$\pm$0.046             &35.71$\pm$2.42              &0.871$\pm$0.054             \\
Score-MRI & 31.95$\pm$1.45              &0.812$\pm$0.036              &27.97$\pm$2.03              &0.738$\pm$0.053              &33.96$\pm$1.27              &0.858$\pm$0.028              &30.82$\pm$1.37              &0.762$\pm$0.034              \\
\hline % Horizontal line before the special method row
\rowcolor{LightOrange}
BGDM (\textbf{ours})           &34.38$\pm$2.66              &0.868$\pm$0.051              &32.59$\pm$2.91                 &0.831$\pm$0.063              &37.92$\pm$2.47              &0.904$\pm$0.049             &36.30$\pm$2.54              &0.881$\pm$0.056             \\
% Horizontal line after the special method row
\rowcolor{mylightblue}
R-BGDM (\textbf{ours})         &\textbf{34.73}$\pm$\textbf{2.73}              & \textbf{0.875}$\pm$\textbf{0.050}              &\textbf{32.74}$\pm$\textbf{2.94}              &\textbf{0.835}$\pm$\textbf{0.063}              &\textbf{38.12}$\pm$\textbf{2.51}              &\textbf{ 0.908}$\pm$\textbf{0.048}              &\textbf{36.50}$\pm$\textbf{2.59}              &\textbf{0.885}$\pm$\textbf{0.056}              \\
\hline
\end{tabular}
\end{adjustbox}
\label{tb1}
\vspace{-4pt}
\end{table*}
\begin{figure*}[!t]
\small
\centering
\begin{adjustbox}{width=1\linewidth}
\begin{tikzpicture}
\draw (0, 0) node[inner sep=0] {\includegraphics[width=1\textwidth]{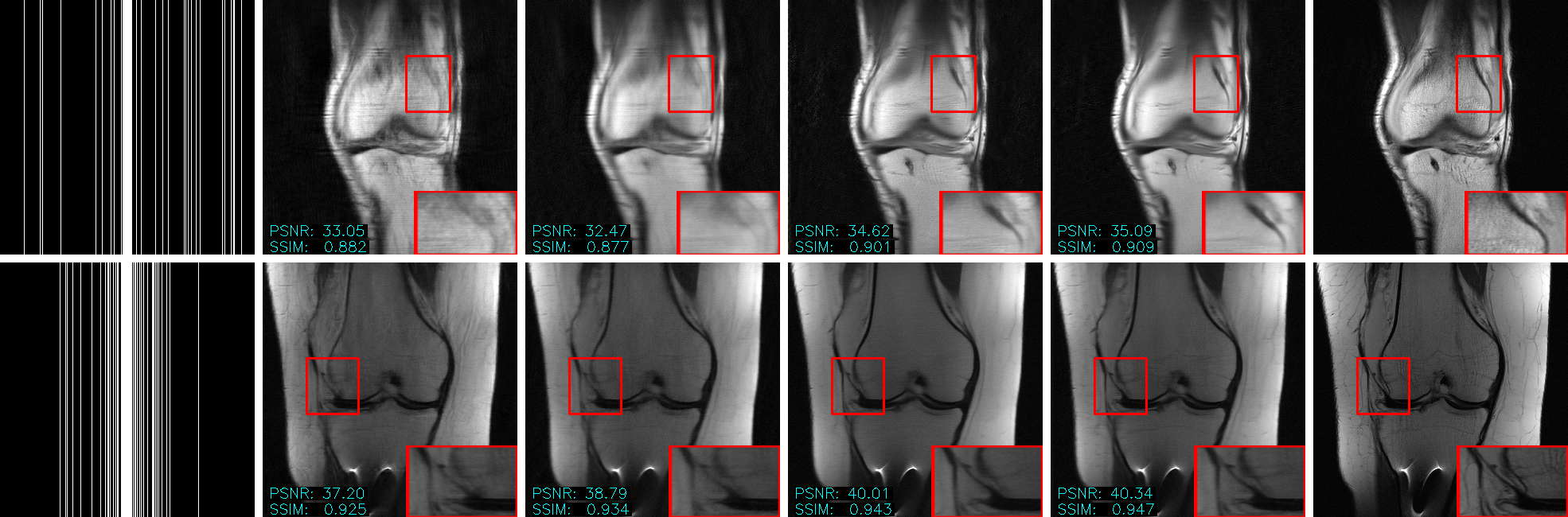}};
\draw (-7.5, 3.15) node {\fontsize{8pt}{9pt}\selectfont Mask};
\draw (-4.4, 3.15) node {\fontsize{8pt}{9pt}\selectfont DPS};
\draw (-1.6, 3.15) node {\fontsize{8pt}{9pt}\selectfont DDNM};
\draw (1.6, 3.15) node {\fontsize{8pt}{9pt}\selectfont {BGDM} (\textbf{ours})};
\draw (4.4, 3.15) node {\fontsize{8pt}{9pt}\selectfont {R-BGDM} (\textbf{ours})};
\draw (7.5, 3.15) node {\fontsize{8pt}{9pt}\selectfont References}; 
\draw (-9.25, 1.6) node[rotate=90] {\fontsize{7pt}{7pt}\selectfont Uniform1D};
\draw (-9.25, -1.6) node[rotate=90] {\fontsize{7pt}{7pt}\selectfont Gaussian1D};
\end{tikzpicture}
\end{adjustbox}
\caption{The visual representation of results from the fastMRI knee dataset, obtained using 100 steps for Gaussian1D and Uniform1D masks at an ACR of 8.}
\label{fig5}
\vspace{-6pt}
\end{figure*}
\subsection{Setup}
\subsubsection{Data Sets}
to demonstrate the performance of our proposed method, we present our image reconstruction evaluation on four publicly available datasets. For undersampled MRI experiments, we rely on real-valued Brain Tumor Segmentation (BraTS) 2021 \cite{menze2014multimodal, bakas2017advancing} and complex-valued fastMRI knee datasets \cite{zbontar2018fastmri}. In our evaluation with the BraTS dataset, we follow the approach outlined in \cite{song2021solving}, where 3D MRI volumes are sliced to obtain 297,270 images with an image size of $240\times240$ for the training set. We simulate MRI measurements using the Fast Fourier Transform (FFT) and undersample the k-space using an equispaced Cartesian mask, from an acceleration factor of 4 to 24. When conducting experiments on fastMRI knee dataset, we follow \cite{chung2022score} to appropriately crop the raw k-space data to 320$\times$320 pixels. We then generate single-coil minimum variance unbiased estimator images as our ground truth references. To simulate measurements, the data is processed using the FFT and then undersampled with one-dimensional Gaussian and Uniform masks with acceleration factors 4 and 8. For the sparse-view CT reconstruction experiment, we used the Lung Image Database Consortium (LIDC) dataset \cite{armato2011lung, clark2013cancer}. From this dataset, we derived 130,304 two-dimensional images with the size of $320\times320$ by slicing the original 3D CT volumes. We produce sinograms using a parallel-beam setup with evenly spaced projection angles of 10 and 23 over 180 degrees, simulating sparse-view CT acquisitions. For the super-resolution task, we utilized the fastMRI brain datasets by downsampling the full-resolution 2D images retrospectively. We selected about 63\% of all 2D images labeled as `reconstruction rss', resulting in 34,698 brain slices for training.
\vspace{1pt}
\subsubsection{Baselines}
we primarily compare our proposed method with two SOTA zero-shot inverse problem solvers: DPS \cite{chung2022diffusion} and DDNM \cite{wang2022zero}. For the Knee fastMRI dataset, we reported the result of Score-MRI \cite{chung2022score} directly from their paper. To ensure a fair comparison, we adopt the incorporation strategies from these methods, along with appropriate parameter settings within our architecture. Also, for CT reconstruction, we replaced the DPS with ScoreMed \cite{song2021solving}. In our experiments, it was observed that the recurrent use of Filtered Back Projection (FBP) tends to be numerically unstable in DPS, frequently resulting in overflow. This has also been reported in \cite{chung2022improving}. Accordingly, for CT reconstruction, we consider ${\color{orange}\zeta}=0$ in Alg \ref{alg1}. For all experiments, results are reported in terms of peak signal-to-noise ratio (PSNR) and structural similarity (SSIM) metrics on a dataset of 1,000 test images. 
\vspace{-2pt}
\subsection{Implementation}\label{A5}
\subsubsection{Architecture}
to learn the prior, we train diffusion-based generative networks using the ADM architecture \cite{dhariwal2021diffusion} and the default parameters presented in \cite{song2021solving}. The models are trained with classifier-free
diffusion guidance without dropout probability. We develop distinct networks for distinct medical imaging tasks: one for undersampled reconstruction using the real-valued BraTS dataset, another for similar tasks with the complex-valued fastMRI dataset, a third for sparse-view reconstruction on the LIDC-CT dataset, and a fourth for super-resolution using the fastMRI brain dataset. 

\begin{figure*}[!t]
\small
\centering
\begin{adjustbox}{width=1\linewidth}
\begin{tikzpicture}
\draw (0, 0) node[inner sep=0] {\includegraphics[width=1\textwidth]{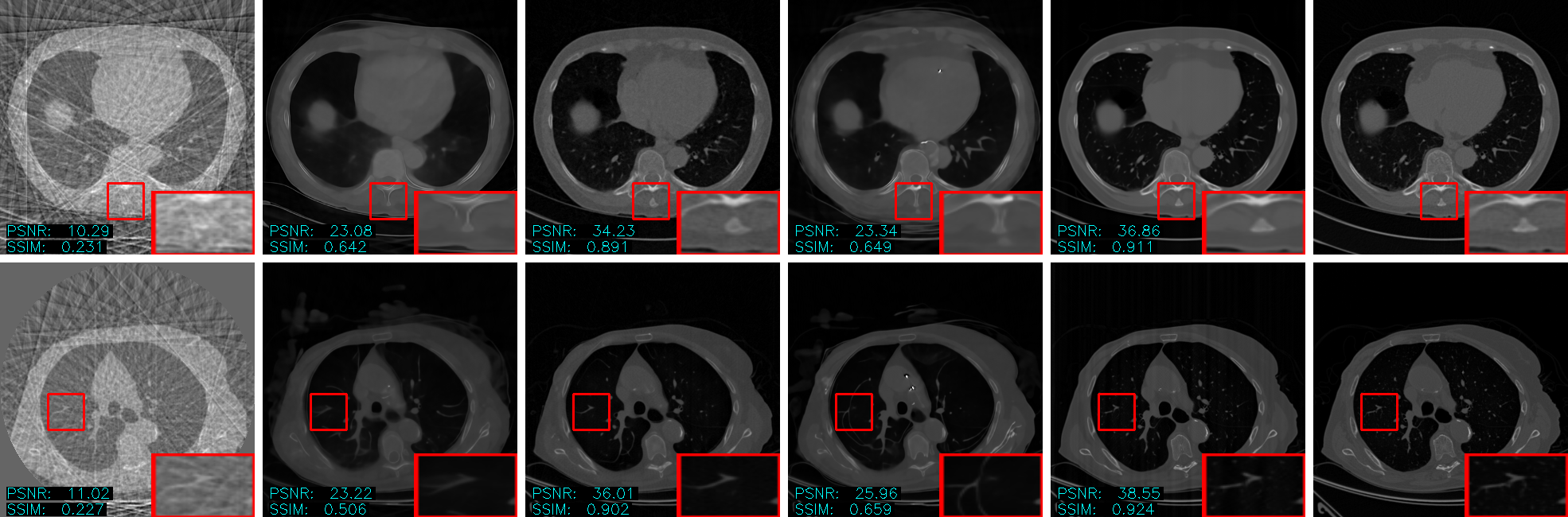}};
\draw (-7.5, 3.15) node {\fontsize{8pt}{9pt}\selectfont FBP};
\draw (-4.4, 3.15) node {\fontsize{8pt}{9pt}\selectfont DDNM};
\draw (-1.6, 3.15) node {\fontsize{8pt}{9pt}\selectfont ScoreMed};
\draw (1.6, 3.15) node {\fontsize{8pt}{9pt}\selectfont {BGDM} (\textbf{ours})};
\draw (4.4, 3.15) node {\fontsize{8pt}{9pt}\selectfont {R-BGDM} (\textbf{ours})};
\draw (7.5, 3.15) node {\fontsize{8pt}{9pt}\selectfont References}; 
\end{tikzpicture}
\end{adjustbox}
\caption{Examples of sparse-view CT reconstruction results on LIDC, all with 23 projections.}
\label{fig6}
\vspace{-6pt}
\end{figure*}
\begin{figure*}[!t]
\small
\centering
\begin{adjustbox}{width=1\linewidth}
\begin{tikzpicture}
\draw (0, 0) node[inner sep=0] {\includegraphics[width=1\textwidth]{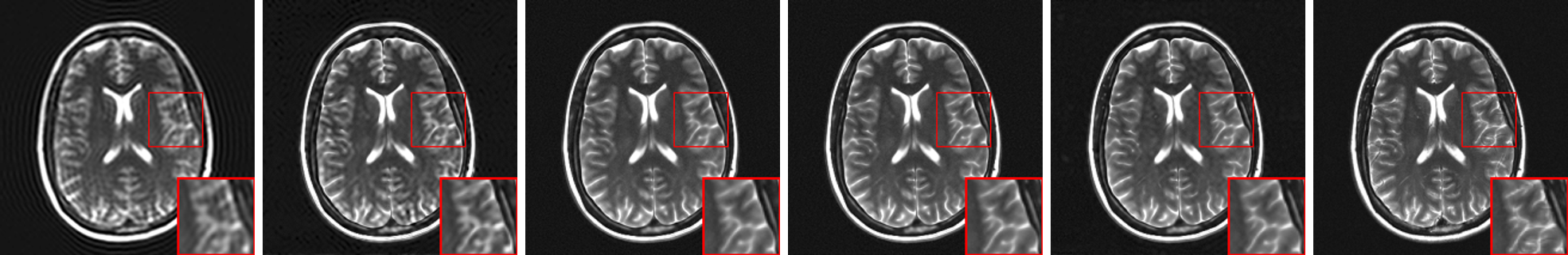}};
\draw (-7.4, 1.7) node {\fontsize{8pt}{9pt}\selectfont Low-resolution};
\draw (-4.5, 1.7) node {\fontsize{8pt}{9pt}\selectfont DPS};
\draw (-1.6, 1.7) node {\fontsize{8pt}{9pt}\selectfont DDNM};  
\draw (1.6, 1.7) node {\fontsize{8pt}{9pt}\selectfont {{BGDM} (\textbf{ours})}};
\draw (4.5, 1.7) node {\fontsize{8pt}{9pt}\selectfont {{R-BGDM} (\textbf{ours})}};
\draw (7.4, 1.7) node {\fontsize{8pt}{9pt}\selectfont References}; 
\end{tikzpicture}
\end{adjustbox}
\caption{The qualitative super-resolution results of the fastMRI brain dataset at ACR 16 with 50 steps.}
\label{fig7}
\end{figure*}
\subsubsection{Sampling Settings and Hyper-parameters.}
For a sampling of our method, we fine-tune parameters $({\color{orange}\zeta}, {\color{orange}\lambda}, {\color{orange}\gamma})$ using cross-validation for each datasets. We observed that extensive hyper-parameter tuning is not required to obtain top-performance results. Accordingly, we limit the hyper-parameter search for each task to $\color{orange}\zeta$ $\in [0 \hspace{3pt} 2]$, $\color{orange}\gamma$ $\in [0 \hspace{3pt} 4.5]$, and $\color{orange}\lambda$ $\in [{10}^{-3} \hspace{3pt} {10}^{-4}]$. For other methods of DPS and DDNM, we relied on their original implementation. In the experiments conducted on the BraTS and fastMRI Knee datasets, we set the parameter $\eta$ to 0.85. Conversely, for the LIDC-CT experiment and fastMRI brain dataset, we assign $\eta$ a value of 0.95 and 1 respectively. Regarding the sampling process, we consider over 200 timesteps for the BraTS dataset, 100 timesteps for fastMRI, and 350 timesteps for the LIDC-CT dataset. All these experiments are performed on a single 3090 NVIDIA GPU with a batch size of 1.
\begin{table}[!t]
\captionsetup{font=small, aboveskip=0pt, belowskip=0pt}
\caption{Quantitative results of sparse-view CT reconstruction on the LIDC dataset with 350 NFEs. Methods marked with * indicate that the parameter {\color{orange}$\zeta$} is set to zero.}
\setlength{\tabcolsep}{4pt} 
\renewcommand{\arraystretch}{0.8}
\centering
{\setlength{\extrarowheight}{1.5pt}
\begin{adjustbox}{max width=\textwidth}
\begin{tabular}{c|cc|cc}
\Xhline{2\arrayrulewidth}
\multirow{2}{*}[-1pt]{Method}  & \multicolumn{2}{c|}{\textbf{23}  projection} & \multicolumn{2}{c}{ \textbf{10}  projection} \\\cline{2-5}
& PSNR$\uparrow$ & SSIM$\uparrow$ & PSNR$\uparrow$ & SSIM$\uparrow$ \\
\hline
FBP & 10.07\scalebox{0.7}{$\pm$1.40} & 0.218\scalebox{0.7}{$\pm$0.070} & -- & -- \\
DDNM \cite{wang2022zero}& 23.76\scalebox{0.7}{$\pm$2.21} & 0.624\scalebox{0.7}{$\pm$0.077} & 18.35\scalebox{0.7}{$\pm$2.30} & 0.696\scalebox{0.7}{$\pm$0.047} \\
ScoreMed \cite{song2021solving}& 35.24\scalebox{0.7}{$\pm$2.71} & 0.905\scalebox{0.7}{$\pm$0.046} & 29.52\scalebox{0.7}{$\pm$2.63} & 0.823\scalebox{0.7}{$\pm$0.061} \\ \hline
\rowcolor{LightOrange}
BGDM* (\textbf{ours}) & 25.89\scalebox{0.7}{$\pm$2.43} & 0.671\scalebox{0.7}{$\pm$0.069} & 20.14\scalebox{0.7}{$\pm$2.35} & 0.723\scalebox{0.7}{$\pm$0.043} \\ \rowcolor{mylightblue}
R-BGDM* (\textbf{ours}) & \textbf{35.82}\scalebox{0.7}{\textbf{$\pm${2.45}}} & \textbf{0.911}\scalebox{0.7}{\textbf{$\pm$0.052}} & \textbf{30.22}\scalebox{0.7}{$\pm${\textbf{2.48}}} & \textbf{0.834}\scalebox{0.7}{\textbf{$\pm${0.056}}} \\
\Xhline{2.2\arrayrulewidth}
\end{tabular}
\end{adjustbox}
}\label{tb2}
\vspace{-6pt}
\end{table}
\begin{table}[!t]
\captionsetup{font=small, aboveskip=0pt, belowskip=0pt}
\caption{Super-resolution results on fastMRI Brain.}
\setlength{\tabcolsep}{4pt} 
\renewcommand{\arraystretch}{0.8}
\centering
{\setlength{\extrarowheight}{1.5pt}
\begin{adjustbox}{max width=\textwidth}
\begin{tabular}{c|cc|cc}
\Xhline{2\arrayrulewidth}
\multirow{2}{*}[-1pt]{Method}  & \multicolumn{2}{c|}{\textbf{2$\times$2}  SR} & \multicolumn{2}{c}{ \textbf{4$\times$4}  SR} \\\cline{2-5}
& PSNR$\uparrow$ & SSIM$\uparrow$ & PSNR$\uparrow$ & SSIM$\uparrow$ \\
\hline
DPS \cite{chung2022diffusion}  & 35.44\scalebox{0.7}{$\pm$3.71} & 0.931\scalebox{0.7}{$\pm$0.027} & 30.29\scalebox{0.7}{$\pm$2.84} & 0.854\scalebox{0.7}{$\pm$0.034} \\
DDNM \cite{wang2022zero} & 36.12\scalebox{0.7}{$\pm$3.88} & 0.947\scalebox{0.7}{$\pm$0.020} & 31.84\scalebox{0.7}{$\pm$2.95} & 0.866\scalebox{0.7}{$\pm$0.026} \\ \hline
\rowcolor{LightOrange}
BGDM (\textbf{ours}) & 36.18\scalebox{0.7}{$\pm$3.89} & 0.946\scalebox{0.7}{$\pm$0.020} & 32.08\scalebox{0.7}{$\pm$2.96} & 0.868\scalebox{0.7}{$\pm$0.026} \\ \rowcolor{mylightblue}
R-BGDM (\textbf{ours}) & \textbf{36.23}\scalebox{0.7}{$\pm${\textbf{3.87}}} & \textbf{0.946}\scalebox{0.7}{\textbf{$\pm${0.020}}} & \textbf{32.18}\scalebox{0.7}{\textbf{$\pm$2.97}} & \textbf{0.876}\scalebox{0.7}{$\pm$0.026} \\
\Xhline{2.2\arrayrulewidth}
\end{tabular}
\end{adjustbox}
}\label{tb3}
\vspace{-12pt}
\end{table}
\subsection{Results}
Figure \ref{fig3} illustrates a comparative analysis of the reconstruction quality across a dataset of 1000 BraTS images, using metrics such as PSNR and SSIM, to evaluate performance at various Acceleration Rates (ACR) and with different Network Function Evaluators (NFEs). The evaluation underscores the superior performance of both BGDM over other methods, demonstrating not only higher accuracy but also efficiency in computational time. Notably, BGDM, even at a modest 100 NFEs, significantly performs better than DDNM and DPS operating at a substantially higher 350 NFEs, establishing its noteworthy efficacy in producing accurate reconstructions swiftly. In Figure \ref{fig4}, we display the BraTS image reconstruction results using different methods for test measurements undersampled at 4, 8, and 24 acceleration factors. Our BGDM method achieves superior image fidelity, preserving lesion heterogeneities at 4x and 8x undersampling levels. Unlike DDNM and DPS, both BGDM and R-BGDM maintain data fidelity even at 24x undersampling, producing highly consistent images with the ground truth. The comparison of various methods on the fastMRI knee dataset with 100 NFEs is presented in Table \ref{tb1}, with an illustrative case for each sampling mask type (i.e., Uniform1D and Gaussian1D) showcased in Figure \ref{fig5}. DPS failed to reconstruct acceptable images due to the short 100 sampling steps and DDNM displayed noticeable residual aliasing artifacts. Notably, our BGDM method demonstrated superior performance compared to Score-MRI \cite{chung2023fast} and DDNM by a margin of \textbf{2dB} and \textbf{1.5dB}, respectively. Table \ref{tb2} shows the average results from 1000 test CT images using both 23 and 10 projections. Given ${\color{orange}\zeta}=0$ in Alg \ref{alg1}, R-BGDM  slightly outperforms ScoreMed in terms of PSNR and SSIM values, with both significantly surpassing DDNM. Figure \ref{fig6} illustrates the results of reconstructing two CT lung images from 23 projections using multiple methods. Our method recovers finer details, as seen in the zoomed-in views, and achieves the highest PSNR and SSIM values. The average results of the 2$\times$2 and 4$\times$4 super-resolution tasks on 1000 images are reported in Table \ref{tb3}, with BGDM and R-BGDM slightly over-performing DDNM. Figure \ref{fig7} demonstrates the 4$\times$4 super-resolution result from different methods. DPS showed severe residual ringing artifacts, while our BGDM image successfully recovered the detailed brain structures with less blurring than DDNM.
\begin{table*}[!t]
\small
\captionsetup{font=small, aboveskip=0pt, belowskip=0pt}
\caption{Ablation study results for undersampled MRI reconstruction using the BraTS dataset.}
\setlength{\tabcolsep}{4pt} 
\renewcommand{\arraystretch}{0.8}
\centering
{\setlength{\extrarowheight}{1.5pt}
\begin{adjustbox}{max width=\textwidth}
\begin{tabular}{c|cc|cc|cc}
\Xhline{2.2\arrayrulewidth}
\multirow{2}{*}[-1pt]{Method}  & \multicolumn{2}{c|}{4$\times$ ACR} & \multicolumn{2}{c|}{8$\times$ ACR} & \multicolumn{2}{c}{24$\times$ ACR} \\\cline{2-7}
& PSNR$\uparrow$ & SSIM$\uparrow$ & PSNR$\uparrow$ & SSIM$\uparrow$ & PSNR$\uparrow$ & SSIM$\uparrow$\\
\hline
DPS \cite{chung2022diffusion}  & 37.84\scalebox{0.7}{$\pm$2.26} & 0.948\scalebox{0.7}{$\pm$0.018} & 35.98\scalebox{0.7}{$\pm$2.15} & 0.939\scalebox{0.7}{$\pm$0.020} & 29.46\scalebox{0.7}{$\pm$3.66} & 0.815\scalebox{0.7}{$\pm$0.067}\\
DDNM \cite{wang2022zero}& 39.92\scalebox{0.7}{$\pm$2.35} & 0.965\scalebox{0.7}{$\pm$0.012} & 35.18\scalebox{0.7}{$\pm$2.10} & 0.940\scalebox{0.7}{$\pm$0.017} & 27.09\scalebox{0.7}{$\pm$2.94} & 0.841\scalebox{0.7}{$\pm$0.049}\\
Ours\,${}_{\text{no-\textbf{or}}}$ & 32.38\scalebox{0.7}{$\pm$1.89} & 0.874\scalebox{0.7}{$\pm$0.030} &  29.56\scalebox{0.7}{$\pm$2.01} & 0.845\scalebox{0.7}{$\pm$0.034} & 23.16\scalebox{0.7}{$\pm$2.53} & 0.794\scalebox{0.7}{$\pm$0.044}\\
Ours\,$_{\text{no-\textbf{ir}}}$ & 39.97\scalebox{0.7}{$\pm$2.31} & 0.969\scalebox{0.7}{$\pm$0.011} & {35.36\scalebox{0.7}{$\pm$2.03}} & {0.943\scalebox{0.7}{$\pm$0.015}} & 27.36\scalebox{0.7}{$\pm$2.78} & 0.849\scalebox{0.7}{$\pm$0.041}\\ 
Ours\,$_{\text{no-\textbf{i}}}$ & 41.37\scalebox{0.7}{$\pm$2.72} & 0.967\scalebox{0.7}{$\pm${0.009}} & 37.06\scalebox{0.7}{$\pm$2.04} & 0.923\scalebox{0.7}{$\pm$0.011} & 28.37\scalebox{0.7}{$\pm$3.23} & 0.832\scalebox{0.7}{$\pm$0.047}\\\hline
\rowcolor{LightOrange}
BGDM (\textbf{ours}) & 41.54\scalebox{0.7}{$\pm$2.90} & \textbf{0.980}\scalebox{0.7}{$\pm$\textbf{0.008}} & 38.02\scalebox{0.7}{$\pm$2.31} & 0.961\scalebox{0.7}{$\pm${0.009}} & 29.87\scalebox{0.7}{$\pm$3.31} & {0.887}\scalebox{0.7}{$\pm${0.036}}\\
\rowcolor{mylightblue}
{R-BGDM} (\textbf{ours}) & \textbf{41.94}\scalebox{0.7}{\textbf{$\pm$2.88}} & {0.977\scalebox{0.7}{$\pm${0.008}}} & \textbf{38.46}\scalebox{0.7}{\textbf{$\pm${2.54}}} & \textbf{0.964}\scalebox{0.7}{\textbf{$\pm${0.011}}} & \textbf{30.04}\scalebox{0.7}{\textbf{$\pm$3.33}} & \textbf{0.887}\scalebox{0.7}{\textbf{$\pm$0.039}}\\
\Xhline{2.2\arrayrulewidth}
\end{tabular}
\end{adjustbox}
}\label{tb4}
\vspace{-8pt}
\end{table*}
\begin {figure*}[!t]
\centering
\begin{adjustbox}{width=1\linewidth}
\begin{tikzpicture}
\draw (0, 0) node[inner sep=0] {\includegraphics[width=1\textwidth, height=3cm]{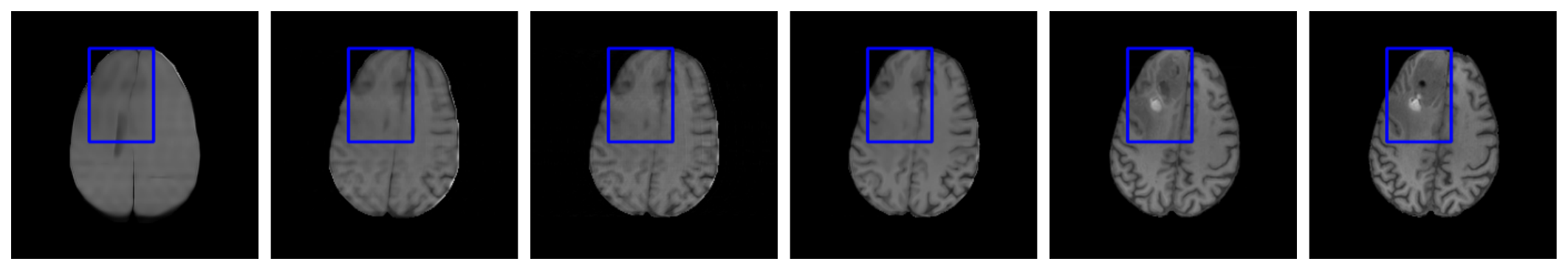}};
\draw (-7.5, 1.6) node {\fontsize{8pt}{9pt}\selectfont Ours\,$_{\text{no-\textbf{or}}}$};
\draw (-4.5, 1.6) node {\fontsize{8pt}{9pt}\selectfont Ours\,$_{\text{no-\textbf{ir}}}$};
\draw (-1.5, 1.6) node {\fontsize{8pt}{9pt}\selectfont Ours\,$_{\text{no-\textbf{i}}}$};
\draw (1.5, 1.6) node {\fontsize{8pt}{9pt}\selectfont {BGDM} };
\draw (4.5, 1.6) node {\fontsize{8pt}{9pt}\selectfont {R-BGDM}};
\draw (7.5, 1.6) node {\fontsize{8pt}{9pt}\selectfont Reference};  
\end{tikzpicture}
\end{adjustbox}
\caption{A representative visual result of the ablation study, showcasing the 24x scenario.}
\label{fig8}
\vspace{-8pt}
\end{figure*}
\subsection{Ablation Studies}
To assess the impact of key components in \textbf{R-BGDM} sampling algorithm, we performed ablations on the undersampled MRI task using the BraTS dataset. The summarized outcomes are presented in Table \ref{tb3}, evaluating three key variations in Algorithm \ref{alg1}: (i) the exclusion of\underline{ \emph{\textbf{o}uter-level} optimization} (step \textbf{\ref{step9}}) and additional \underline{\textbf{r}efinement gradient} step termed `no-\textbf{or}', (ii) the omission of \underline{ \emph{\textbf{i}nner-level} guidance} (step \textbf{\ref{step8}}) and \underline{\textbf{r}efinement gradient} designated as `no-\textbf{ir}', and (iii) the absence of \underline{ \emph{\textbf{i}nner-level} guidance} alone (step \textbf{\ref{step8}}) noted as `no-\textbf{i}'. Our observations indicate that proximal optimization plays the most substantial role in both BGDM and R-BGDM methods, with improvements achieved through { \emph{\textbf{i}nner-level} guidance} and further refinement. Remarkably, our algorithm outperforms all baselines even without the refinement step, yet further improves performance when this step is incorporated, as illustrated in Fig \ref{fig8}.
\section{Conclusion, Limitation, and Future Works}
\vspace{-2pt}
In this paper, we propose an effective bi-level guided conditional sampling approach for diffusion models to tackle inverse problems in medical imaging. Through extensive experiments, our method demonstrates its superiority to other methods on several highly heterogeneous, publicly available medical datasets, thereby validating our analysis. Theoretically, our approach is amenable to resolving other linear inverse problems such as inpainting, super-resolution, deblurring, and so forth, provided that the pertinent diffusion model is accessible. 
A few limitations remain that deserve further examination.
\begin{itemize}

\item Despite achieving superior reconstruction results compared to other methods \cite{song2021solving, chung2022diffusion, wang2022zero} and demonstrating more efficient sampling for medical imaging applications \cite{chung2022score, song2021solving, jalal2021robust, chung2023fast}, BGDM remains sensitive to hyperparameters. Therefore, exploring a more general hyperparameter tuning approach, such as Bayesian optimization, would be beneficial. 

\item {{It should be noted that our CT simulation adheres to the 2D parallel beam geometry assumption, aligning with the baseline models used in other studies for direct comparison. This differs from the more complex and realistic 3D cone-beam CT or helical CT simulations \cite{kim2014combining}. Additionally, the BraTS dataset, employed both in our study and by the baseline methods, has been indicated in a recent paper \cite{shimron2022implicit} to have an overestimated undersampling factor, which arises from the conjugate symmetry of k-space inherent in real-valued images.}}
\end{itemize}
In future work, we plan to enhance our model for compatibility with 3D simulations and adaptability to distributional shifts \cite{barbano2023steerable, askari2023mapflow}.

\balance
\bibliography{references}
\bibliographystyle{unsrt}

\clearpage
\balance
\section*{Appendices}
\begin{appendices}

\subsection{Related Works}\label{A1}
A solution to the inverse problem $\mathbf{y}=\mathcal{A}\mathbf{x}+\mathbf{n}$,  can be probabilistically derived via the maximum likelihood estimation (MLE), defined as ${\mathbf{x}}_{\text{ML}} = \arg\max_{\mathbf{x}} \log p(\mathbf{y}|\mathbf{x})$, where $\textstyle{p(\mathbf{y}|\mathbf{x})}:= \mathcal{N}(\mathcal{A}\mathbf{x}, \sigma_{\mathbf{y}}^2)$ represents the likelihood of observation $\textstyle{\mathbf{y}}$, ensuring data consistency. Nevertheless, if the forward operator $\textstyle{\mathcal{A}}$ is singular, e.g., when $m < n$, the problem is ill-posed. In such cases, it is fundamentally infeasible to uniquely recover the signal set $\textstyle{\mathcal{X}}$ using only the observed measurements $\textstyle{\mathcal{Y}}$, even in the noiseless scenario where $\textstyle{\mathcal{Y} = \mathcal{A}\mathcal{X}}$. This challenge arises due to the nontrivial nature of the null space of $\textstyle{\mathcal{A}}$. 

To mitigate the ill-posedness, it is therefore essential to incorporate an additional assumption based on \emph{prior} knowledge to constrain the space of possible solutions. A predominantly adopted framework that offers a more meaningful solution is Maximum a Posteriori (MAP) estimation which is formulated as $\textstyle{{\mathbf{x}}_{\text{MAP}} = \arg\max_{\mathbf{x}} [{\log p(\mathbf{y}|\mathbf{x})} + \log p(\mathbf{x})]}$, where the term $\textstyle{\log p(\mathbf{x})}$ encapsulates the prior information of the clean image $\textstyle{\mathbf{x}}$. 

The concept of priors in solving inverse problems has evolved considerably over time. Classically, many methodologies relied on hand-crafted priors, which are analytically defined constraints such as sparsity \cite{candes2008introduction, tang2009performance}, low-rank \cite{fazel2008compressed, cui2014likelihood}, total variation \cite{candes2006robust}, to name but a few, to enhance reconstruction. With the advent of deep learning models, priors have transitioned to being data-driven, yielding significant gains in reconstruction quality \cite{bora2017compressed, mardani2018deep, ardizzone2018analyzing, goh2019solving, asim2020invertible, whang2021solving}. These priors, whether learned in a supervised or unsupervised fashion, have been integrated within the MAP framework to address ill-posed inverse problems. In the supervised paradigm, the reliance on the availability of paired original images and observed measurements also can potentially limit the model's generalizability. As a result, the trend has shifted towards an increasing interest in unsupervised approaches, where priors are learned implicitly or explicitly using deep generative models. 

The strategies within the unsupervised learning paradigm vary based on how the learned priors (a.k.a. generative priors) are imposed. For instance,  generators \( \mathcal{G}_{\theta} \) in pre-trained generative models such as Generative Adversarial Networks (GANs)\cite{goodfellow2016deep, bora2017compressed}, Variational Autoencoders (VAEs) \cite{ardizzone2018analyzing}, and Normalizing Flows (NFs) \cite{asim2020invertible}, are employed as priors to identify the latent code that explains the measurements, as described by the optimization problem
$\hat{\mathbf{z}} = \arg\max_{\mathbf{z}} \, \log p(\mathbf{y}|\mathcal{G}_{\theta}(\mathbf{z})) + \log p(\mathbf{z})$. In such a way, the solution $\hat{\mathbf{z}}$ is constrained to be within the domain of the generative model. This approach, however, suffers from critical restrictions. In the first place, the low dimensionality of the latent space is a major concern, as it hampers the reconstruction of images that lie outside their manifold. Additionally, it demands computationally expensive iterative updates, given the complexity of generator $\mathcal{G}_{\theta}$. Crucially, the deterministic nature of the recovered solutions hinders the assessment of the reliability of reconstruction. In fact, MAP inference fails to fully capture the entire range of the solution spectrum, particularly when solving an ill-posed problem that might hold multiple solutions aligned closely with both the observed measurements and prior assumptions.

To account for the variety within the solution domain and to measure reconstruction certainty, the inverse problem is tackled from a Bayesian inference standpoint. Bayesian inference yields a posterior distribution, $p(\mathbf{x}|\mathbf{y})$, from which multiple conditional samples can be extracted \cite{brooks2011handbook, blei2017variational}. Recently, pre-trained diffusion models \cite{ho2020denoising, nichol2021improved} are utilized as a powerful generative prior (a.k.a denoiser), in a zero-shot manner, to effectively sample from the conditional posterior \cite{kadkhodaie2021stochastic, daras2022score, rombach2022high}. The strategies for posterior (conditional) sampling via diffusion models fall into two distinct approaches. In the first approach, diffusion models are trained conditionally, directly embedding the conditioning information $\mathbf{y}$ during the training phase \cite{ho2020denoising, rombach2022high, liu20232}. However, conditional training tends to require: (i) the assembly of a massive amount of paired data and its corresponding conditioning ($\mathbf{x}, \mathbf{y}$), and (ii) retraining when testing on new conditioning tasks, highlighting the adaptability issues. In the second approach,  unconditionally pre-trained diffusion models are employed as generative prior (a.k.a denoiser) to perform conditional sampling for certain tasks. A primary difficulty, however, is how to impose data consistency between measurements and the generated images in each iteration \cite{chung2022diffusion, wang2022zero, chung2023fast}.

\subsection{Closed-form solutions}\label{A3}

Consider the following optimization problem in \eqref{eq10}

\begin{align*}
{\hat{\mathbf{x}}_{0|t}} = \underset{\mathbf{x}}{\operatorname{arg\, min}} \frac{1}{2}\Vert\mathbf{y}-\mathcal{A}\mathbf{x}\Vert_{2}^{2} + \frac{\lambda}{2} \Vert \mathbf{x}-{\mathbf{x}_{0|t}} \Vert_{2}^{2}.
\end{align*}
For the MRI reconstruction task, we express  $\mathcal{A}\mathbf{x}  = \mathcal{M} \odot (\mathcal{F}\mathbf{x}) = \mathcal{M} \odot \mathbf{w}$, where \(\mathcal{M}\) represents the Cartesian equispaced mask, \(\mathcal{F}\) is the Fourier matrix, and \(\odot\) signifies element-wise multiplication. 
% The whole process and matrix decomposition for MRI is illustrated in Fig \ref{fig7}. 
Given this definition, and considering the identity $\textstyle{{\operatorname{arg\, min}}_{\mathbf{x}}\Vert \mathbf{x}-{\mathbf{x}_{0|t}} \Vert_{2}^{2}={\operatorname{arg\, min}}_{\mathbf{x}}\Vert \mathcal{F}\mathbf{x}-\mathcal{F}{\mathbf{x}_{0|t}} \Vert_{2}^{2}}$, then the optimization problem in terms of \(\mathbf{w}\) can be redefined as
\[ {\hat{\mathbf{w}}_{0|t}} = \underset{\mathbf{w}}{\operatorname{arg\, min}} \frac{1}{2}\Vert\mathcal{M} \odot \mathbf{w}-\mathbf{y}\Vert_{2}^{2} + \frac{\lambda}{2}\Vert\mathbf{w}-{\mathbf{w}_{0|t}}\Vert_{2}^{2}. \]
By expanding the L2-norm terms, we obtain
\begin{align*} {\hat{\mathbf{w}}_{0|t}} = \underset{\mathbf{w}}{\operatorname{arg\, min}} \sum_{i=1}^n (m_iw_i-y_i)^2 + \lambda \sum_{i=1}^n (w_i-w_{0|t}^i)^2. 
\end{align*}
The solution for \({\hat{\mathbf{w}}_{0|t}}\) is
\begin{align*}
{\hat{\mathbf{w}}_{0|t}} = \frac{\mathcal{M} \mathbf{y} + \lambda \mathbf{w}_{0|t}}{\mathcal{M}+\lambda}.
\end{align*}
Given the relation ${\hat{\mathbf{x}}_{0|t}} = \mathcal{F}^{-1}{\hat{\mathbf{w}}_{0|t}}$, we can then deduce
\begin{equation*}
{{\hat{\mathbf{x}}_{0|t}} = \mathcal{F}^{-1}\left(\frac{\mathcal{M} \mathbf{y} + \lambda \mathcal{F}\mathbf{x}_{0|t}}{\mathcal{M}+\lambda}\right)}
\end{equation*}
Consider the following range-null space decomposition defined in \eqref{eq8}
\begin{align*}
{\hat{\mathbf{x}}_{0|t}} = \mathcal{A}^{\dagger} \mathbf{y}+\left(\mathbf{I}-\mathcal{A}^{\dagger} \mathcal{A}\right) \mathbf{x}_{0|t}.
\end{align*}
where \(\mathcal{A}^{\dagger}\) denotes the pseudo-inverse of matrix \(\mathcal{A}\) and \(\mathbf{I}\) is the identity matrix. 

For MRI, the forward operator is modelled as \(\mathcal{A}=\mathcal{M} \mathcal{F}\). An important property that arises is \(\mathcal{A}\mathcal{A}\mathcal{A}\equiv\mathcal{A}\), which suggests that \(\mathcal{A}\) itself can be represented as its pseudo-inverse \(\mathcal{A}^{\dagger}\). With this property, the pseudo-inverse is then expressed as \(\mathcal{A}^{\dagger}=\mathcal{F}^{-1} \mathcal{M}\). Substituting this representation into our original expression, we obtain
\[ {\hat{\mathbf{x}}_{0|t}} = \mathcal{F}^{-1} \mathcal{M} \mathbf{y}+\left(\mathbf{I}-\mathcal{F}^{-1} \mathcal{M} \mathcal{F}\right) \mathbf{x}_{0|t}. \]
Using the Fourier identity \(\mathcal{F}^{-1} \mathcal{F} = \mathbf{I}\), we can further simplify this to:
\[{{\hat{\mathbf{x}}_{0|t}} = \mathcal{F}^{-1}\left(\mathcal{M} \mathbf{y}+(\mathbf{I}-\mathcal{M}) \mathcal{F} \mathbf{x}_{0|t}\right)} \]

\subsection{Posterior mean}\label{A4}

\subsubsection{Posterior mean with additional measurement for VPSDE }\label{A41}
A notable SDE with an analytic transition probability is the variance-Preserving SDE (VPSDE) \cite{song2020score, karras2022elucidating}, which considers $\textstyle{\mathbf{f}(\mathbf{x}_{t},t)=-\frac{1}{2}\beta(t)\mathbf{x}_{t}}$ and $\textstyle{g(t)=\sqrt{\beta(t)}}$, where $\textstyle{\beta(t)=\beta_{min}+t(\beta_{max}-\beta_{min})}$; and its transition probability follows a Gaussian distribution of $\textstyle{p_{0t}(\mathbf{x}_{t}\vert\mathbf{x}_{0})=\mathcal{N}(\mathbf{x}_{t};\bm{\mu}_t\mathbf{x}_{0},\bm{\sigma}_t^{2}\mathbf{I})}$ with $\textstyle{\bm{\mu}_t=\exp\{-\frac{1}{2}\int_{0}^{t}\beta(s)\mathrm{s}}\}$ and $\bm{\sigma}_t^{2}=1-\exp\{-\int_{0}^{t}\beta(s)\mathrm{s}\}$. Given such transition probability, we seek to derive the corresponding posterior mean with additional measurement.

Begin by representing the distribution \( p(\mathbf{x}_t | \mathbf{y}) \) as marginalizing out \( \mathbf{x}_0 \) conditioned on \( \mathbf{y} \):
\begin{align*}
p(\mathbf{x}_t | \mathbf{y}) = \int_{\mathbf{x}_0} p(\mathbf{x}_t | \mathbf{x}_0, \mathbf{y}) p(\mathbf{x}_0 | \mathbf{y}) d\mathbf{x}_0.    
\end{align*}
Differentiate w.r.t. \( \mathbf{x}_t \) on both sides
\begin{align*}
\nabla_{\mathbf{x}_t} p(\mathbf{x}_t | \mathbf{y}) = \int_{\mathbf{x}_0} p(\mathbf{x}_0 | \mathbf{y}) \nabla_{\mathbf{x}_t} p(\mathbf{x}_t | \mathbf{x}_0, \mathbf{y}) d\mathbf{x}_0.
\end{align*}
With our new probability distribution model, the gradient becomes
\begin{align*} 
\nabla_{\mathbf{x}_t} \log p(\mathbf{x}_t | \mathbf{x}_0) = \frac{(\bm{\mu}_t \mathbf{x}_0-\mathbf{x}_t)}{\bm{\sigma}^2_t}.
\end{align*}
Inserting this into our previous equation, we have
\begin{align*}
\nabla_{\mathbf{x}_t} p(\mathbf{x}_t | \mathbf{y}) = \int_{\mathbf{x}_0} p(\mathbf{x}_0 | \mathbf{y}) p(\mathbf{x}_t | \mathbf{x}_0, \mathbf{y}) \frac{(\bm{\mu}_t \mathbf{x}_0-\mathbf{x}_t)}{\bm{\sigma}^2_t} d\mathbf{x}_0.
\end{align*}
Simplifying the above equation, we get:
\begin{align*}
\nabla_{\mathbf{x}_t} p(\mathbf{x}_t | \mathbf{y}) &= \frac{1}{\bm{\sigma}^2_t} \Bigg[\int_{\mathbf{x}_0} p(\mathbf{x}_0 | \mathbf{y}) p(\mathbf{x}_t | \mathbf{x}_0, \mathbf{y}) \bm{\mu}_t \mathbf{x}_0 d\mathbf{x}_0 \\
&\quad - \int_{\mathbf{x}_0} p(\mathbf{x}_0 | \mathbf{y}) p_t(\mathbf{x}_t | \mathbf{x}_0, \mathbf{y}) \mathbf{x}_t d\mathbf{x}_0\Bigg].
\end{align*}
Using Bayes' rule and recognizing the marginalization, we get:
\begin{align*}
\nabla_{\mathbf{x}_t} p(\mathbf{x}_t | \mathbf{y}) &= \frac{1}{\bm{\sigma}^2_t} \Bigg[\int_{\mathbf{x}_0} \bm{\mu}_t \mathbf{x}_0 p(\mathbf{x}_t | \mathbf{y}) p(\mathbf{x}_0 | \mathbf{x}_t, \mathbf{y})d\mathbf{x}_0 \\
&\quad - \mathbf{x}_t p(\mathbf{x}_t | \mathbf{y})\Bigg].
\end{align*}
\begin{align*}
\nabla_{\mathbf{x}_t} p(\mathbf{x}_t | \mathbf{y}) = \frac{1}{\bm{\sigma}^2_t} \left[\bm{\mu}_t p(\mathbf{x}_t | \mathbf{y})\E[\mathbf{x}_0 | \mathbf{x}_t, \mathbf{y}]-\mathbf{x}_t p(\mathbf{x}_t | \mathbf{y}) \right].
\end{align*}
\begin{align*}
\frac{\nabla_{\mathbf{x}_t} p(\mathbf{x}_t | \mathbf{y})}{p(\mathbf{x}_t | \mathbf{y})} = \frac{1}{\bm{\sigma}^2_t} \left[\bm{\mu}_t \E[\mathbf{x}_0 | \mathbf{x}_t, \mathbf{y}]- \mathbf{x}_t)\right].
\end{align*}
Using the identity property of logarithm  $\nabla_{\mathbf{x}} \log p(\mathbf{x}) = \nabla_{\mathbf{x}} p(\mathbf{x}) / p(\mathbf{x})$, we can rewrite:
\begin{align*}
\nabla_{\mathbf{x}_t} \log p(\mathbf{x}_t | \mathbf{y}) = \frac{1}{\bm{\sigma}^2_t} \left[ \bm{\mu}_t\E[\mathbf{x}_0 | \mathbf{x}_t, \mathbf{y}]-\mathbf{x}_t\right].
\end{align*}
From this, the posterior mean becomes:
\begin{align*} 
\E[\mathbf{x}_0 | \mathbf{x}_t, \mathbf{y}] =  \frac{\mathbf{x}_t +\bm{\sigma}^2_t \nabla_{\mathbf{x}_t} \log p(\mathbf{x}_t | \mathbf{y})}{\bm{\mu}_t}.
\end{align*}

This shows that the posterior mean of \( \mathbf{x}_0 \) conditioned on \( \mathbf{x}_t \) and \( \mathbf{y} \) now incorporates a scaling by \( \bm{\mu}_t \). By considering $\bm{\mu}_t = \sqrt{ \overline{\alpha}_t}$ and $\bm{\sigma}^2_t = 1-\overline{\alpha}_t$, we have then 
\begin{align*} 
\E[\mathbf{x}_0 | \mathbf{x}_t, \mathbf{y}] =  \frac{1}{\sqrt{ \overline{\alpha}_t}}({\mathbf{x}_t + (1- \overline{\alpha}_t) \nabla_{\mathbf{x}_t} \log p(\mathbf{x}_t | \mathbf{y})}).
\end{align*}

\subsubsection{Approximated Conditional Posterior Mean}\label{A42}
\begin{equation*}
\mathbb{E}[{\mathbf{x}_0|\mathbf{x}_t, \mathbf{y}}] = \frac{1}{{\sqrt{\overline{\alpha}_{t}}}}\big(\mathbf{x}_{t}+({1-\overline{\alpha}_{t}}){\nabla_{{{\mathbf{x}_t}}}\log p({\mathbf{x}_t}|\mathbf{y})}\big)    
\end{equation*} 
Considering \eqref{eq5} we have 
\begin{align*}
\mathbb{E}[{\mathbf{x}_0|\mathbf{x}_t, \mathbf{y}}] &= \frac{1}{{\sqrt{\overline{\alpha}_{t}}}}\big(\mathbf{x}_{t}+({1-\overline{\alpha}_{t}})(\nabla_{\mathbf{x}_t} \log p(\mathbf{x}_t) \\
&\quad + \nabla_{\mathbf{x}_t} \log p(\mathbf{y}|\mathbf{x}_t))\big)
\end{align*}
By knowing that $\textstyle{\nabla_{\mathbf{x}_t} \log p(\mathbf{x}_t) \simeq \frac{-1}{\sqrt{1-\overline{\alpha}_t}}\boldsymbol{\epsilon}_{\mathbf{\theta}}(\mathbf{x}_t, t)}$, then we get
\begin{align*}
\mathbb{E}[{\mathbf{x}_0|\mathbf{x}_t, \mathbf{y}}] &\simeq \frac{1}{{\sqrt{\overline{\alpha}_{t}}}}\bigg(\mathbf{x}_{t}+({1-\overline{\alpha}_{t}})\bigg( \frac{-1}{\sqrt{1-\overline{\alpha}_t}}\boldsymbol{\epsilon}_{\mathbf{\theta}}(\mathbf{x}_t, t) \\
&\quad + \nabla_{\mathbf{x}_t} \log p(\mathbf{y}|\mathbf{x}_t)\bigg)\bigg)
\end{align*}
which can be simplified further as 
\begin{align*}
\mathbb{E}[{\mathbf{x}_0|\mathbf{x}_t, \mathbf{y}}] &\simeq \frac{1}{{\sqrt{\overline{\alpha}_{t}}}}\big(\mathbf{x}_{t}- {\sqrt{1-\overline{\alpha}_t}}\boldsymbol{\epsilon}_{\mathbf{\theta}}(\mathbf{x}_t, t) \\
&\quad + (1-\overline{\alpha}_t)\nabla_{\mathbf{x}_t} \log p(\mathbf{y}|\mathbf{x}_t))\big)
\end{align*}
From approximation made by DPS \cite{chung2022diffusion}, that is, $\textstyle{\nabla_{\mathbf{x}_t} \log p(\mathbf{y}|\mathbf{x}_t) \simeq - \nicefrac{1}{\boldsymbol{\sigma}^2_{\mathbf{y}}} \nabla_{\mathbf{x}_t} \Vert{\mathbf{y}-\mathcal{A}(\mathbf{x}_{0|t}) }\Vert_2^2}$, we then get 
\begin{equation*}\label{eq13}
\tilde{\mathbf{x}}_{0|t}  \simeq \frac{1}{{\sqrt{\overline{\alpha}_{t}}}}  \Big[\mathbf{x}_t - \sqrt{1-\overline{\alpha}_{t}} \boldsymbol{\epsilon}_{\mathbf{\theta}}(\mathbf{x}_t, t) - 
{\color{orange}\zeta} \nabla_{\mathbf{x}_t}\Vert \mathbf{y} - \mathcal{A}\mathbf{x}_{0|t}\Vert_2^2
\Big].
\end{equation*}

\subsection{Comparing R-BGDM with Supervised Methods
}\label{A6}
Similar to other zero-shot inverse problem solvers \cite{chung2022diffusion, wang2022zero, kawar2022denoising}, R-BGDM is superior to existing supervised methods \cite{zhou2020dudornet, wei20202} in these dimensions:
\begin{itemize}
\item R-BGDM can be a zero-shot solver for diverse tasks, while supervised methods need to train separate models for each task and sampling patterns.

\item R-BGDM demonstrates robustness to patterns of undersampling and sparsification, whereas supervised techniques exhibit weak generalizability.

\item R-BGDM, akin to ScoreMed \cite{song2021solving} and Score-MRI \cite{chung2022score}, achieves notably enhanced results on medical datasets compared to supervised methods.

\end{itemize}
These claims are substantiated by the experimental results in Table \ref{tb5}. The results are reported from \cite{song2021solving, chung2022score}.
\begin{table*}[t!]
\scriptsize
\centering
\captionsetup{font=small, aboveskip=0pt, belowskip=0pt}
\caption{Comparison of R-BGDM against various supervised methods across multiple datasets.}
\begin{adjustbox}{max width=\textwidth}
\begin{tabular}{c|cc|cc|cc|cc|cc}
\hline
\multirow{2}{*}{Method} & \multicolumn{4}{c|}{BraTS-MRI} & \multicolumn{4}{c|}{fastMRI} & \multicolumn{2}{c}{LIDC-CT} \\ \cline{2-11} 
& \multicolumn{2}{c|}{8$\times$ ACR} & \multicolumn{2}{c|}{24$\times$ ACR} & \multicolumn{2}{c|}{4$\times$ ACR} & \multicolumn{2}{c|}{8$\times$ ACR} & \multicolumn{2}{c}{23 \textbf{Proj}} \\ \cline{2-11} 
& PSNR$\uparrow$         & SSIM$\uparrow$         & PSNR$\uparrow$         & SSIM$\uparrow$         & PSNR$\uparrow$         & SSIM$\uparrow$         & PSNR$\uparrow$         & SSIM$\uparrow$         & PSNR$\uparrow$         & SSIM$\uparrow$        \\ \hline
DuDoRNet \cite{zhou2020dudornet}                & 37.88            & 0.985        &18.46              & 0.662             &33.46             &0.856             &29.65              &0.777            &--              &--             \\
SIN-4c-PRN \cite{wei20202}               &--             &--            &--             &--           &--              &--             &--              &--             &30.48            &0.895 
\\
\hline 
\rowcolor{mylightblue}
\textbf{R-BGDM}           &\textbf{38.46}              &0.964              &\textbf{30.04}              &\textbf{0.887}              &\textbf{34.73}              &\textbf{0.875}              &\textbf{32.74}              &\textbf{0.835}              &\textbf{35.82}              &\textbf{0.911}             \\
\hline 
\end{tabular}
\end{adjustbox}
\label{tb5}
\end{table*}
\subsection{Additional Results}\label{A8}
\begin{figure*}[!ht]
\begin{adjustbox}{width=1\linewidth}
\begin{tikzpicture}
\draw (0, 0) node[inner sep=0] {\includegraphics[width=1\textwidth]{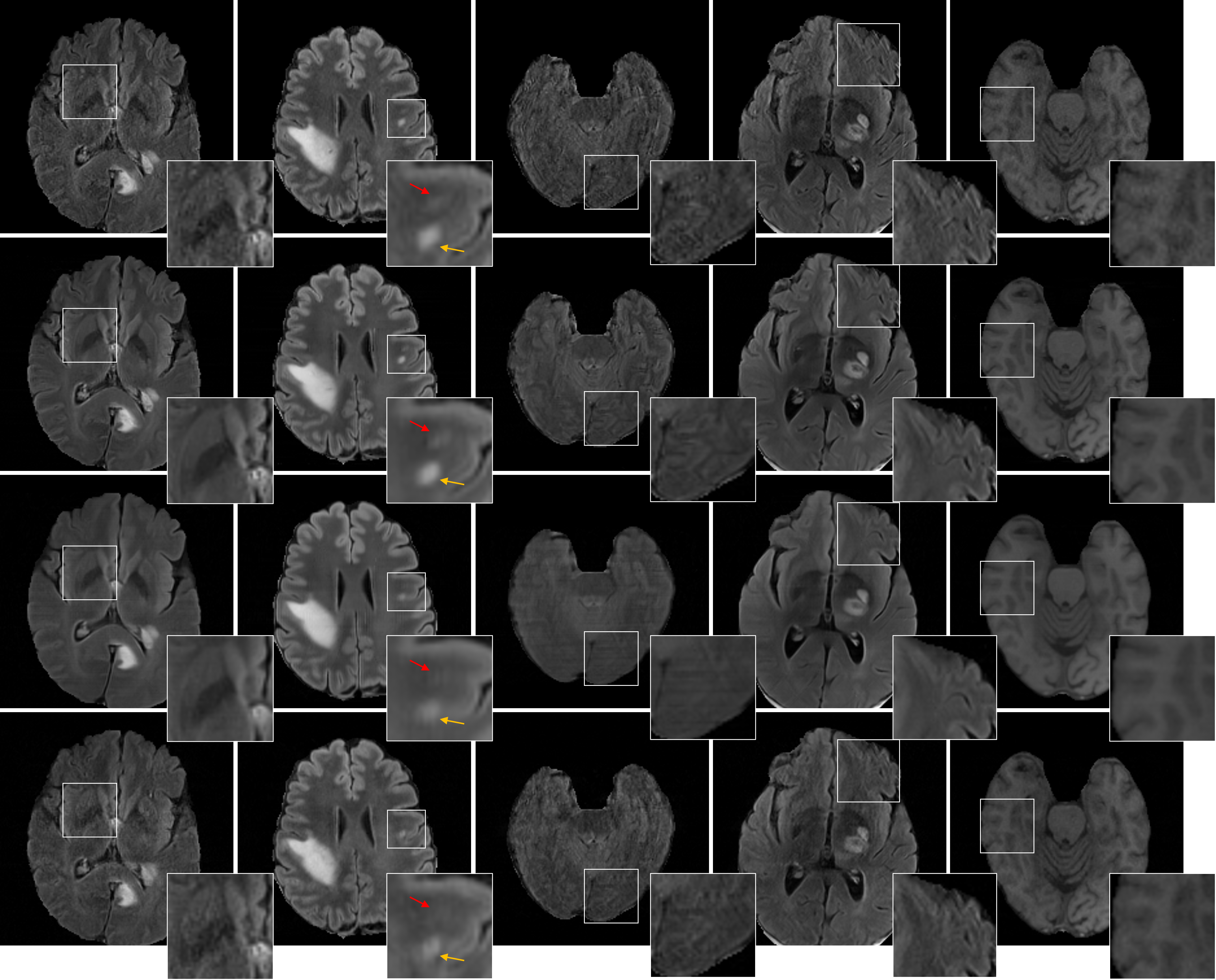}};
% Row labels
\draw (-9.8, 5.5) node{\fontsize{8pt}{9pt}\selectfont References};
\draw (-9.8, 2) node{\fontsize{8pt}{9pt}\selectfont R-BGDM};
% Adding "ours" below R-BGDM
\draw (-9.8, 1.7) node{\fontsize{8pt}{9pt}\selectfont {(\textbf{ours})}};
\draw (-9.8, -1.6) node{\fontsize{8pt}{9pt}\selectfont DDNM};
\draw (-9.8, -5.1) node{\fontsize{8pt}{9pt}\selectfont DPS};
\end{tikzpicture}
\end{adjustbox}
\caption{Additional results from undersampled MRI reconstruction on Brats at 8x acceleration rate.}
\label{fig9}
\end{figure*}
\begin{figure*}[!ht]
\begin{adjustbox}{width=1\linewidth}
\begin{tikzpicture}
\draw (0, 0) node[inner sep=0] {\includegraphics[width=0.94\textwidth]{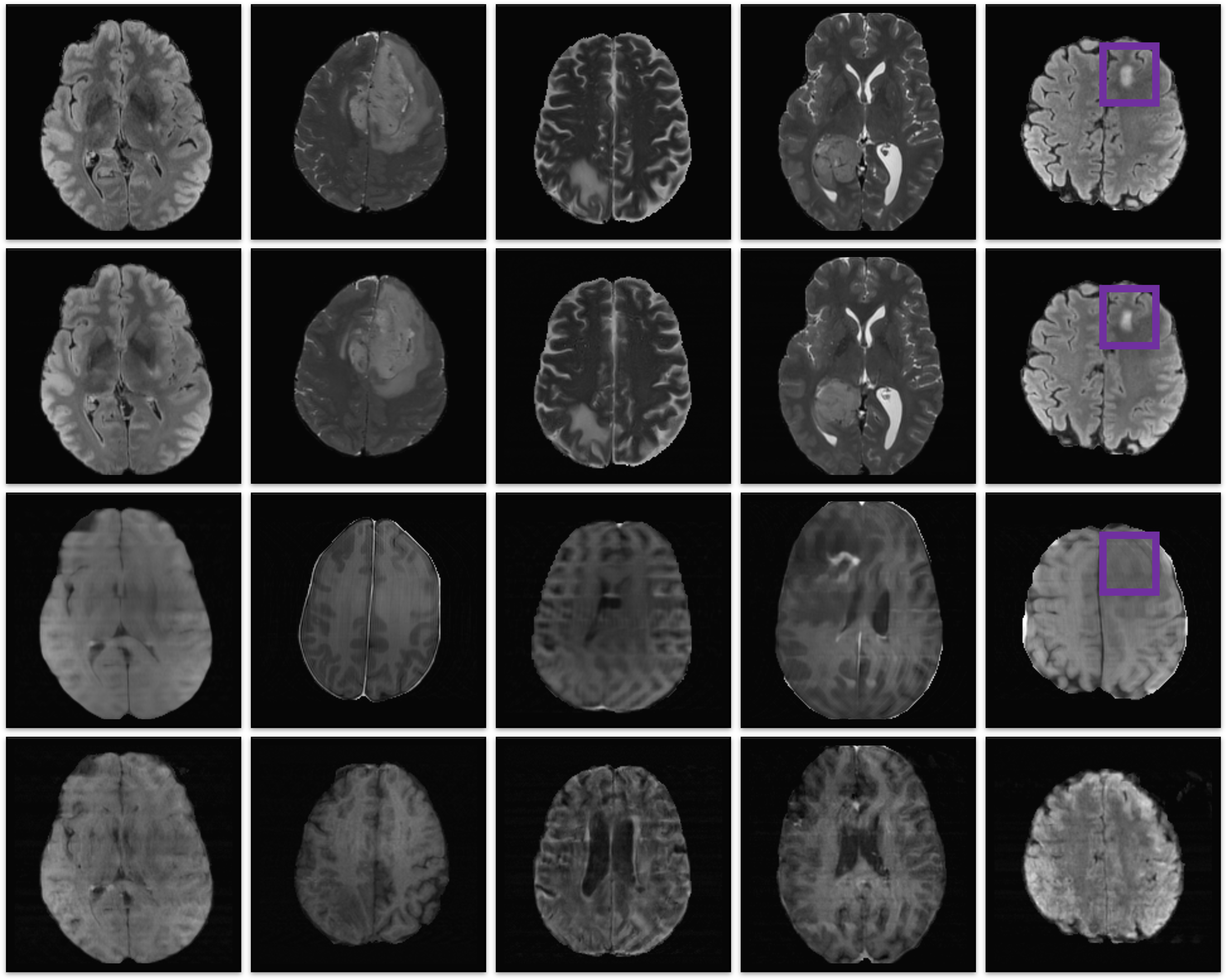}};
% Row labels
\draw (-9.1, 5.1) node{\fontsize{8pt}{9pt}\selectfont References};
\draw (-9.1, 1.5) node{\fontsize{8pt}{9pt}\selectfont R-BGDM};
% Adding "ours" below R-BGDM
\draw (-9.1, 1.2) node{\fontsize{8pt}{9pt}\selectfont {(\textbf{ours})}};
\draw (-9.1, -1.5) node{\fontsize{8pt}{9pt}\selectfont DDNM};
\draw (-9.1, -5.1) node{\fontsize{8pt}{9pt}\selectfont DPS};
\end{tikzpicture}
\end{adjustbox}
\caption{Additional results from undersampled MRI reconstruction on Brats at 24x acceleration rate.}
\label{fig10}
\end{figure*}
\begin{figure*}[!ht]
\begin{adjustbox}{width=1\linewidth}
\begin{tikzpicture}
\draw (0, 0) node[inner sep=0] {\includegraphics[width=0.94\textwidth]{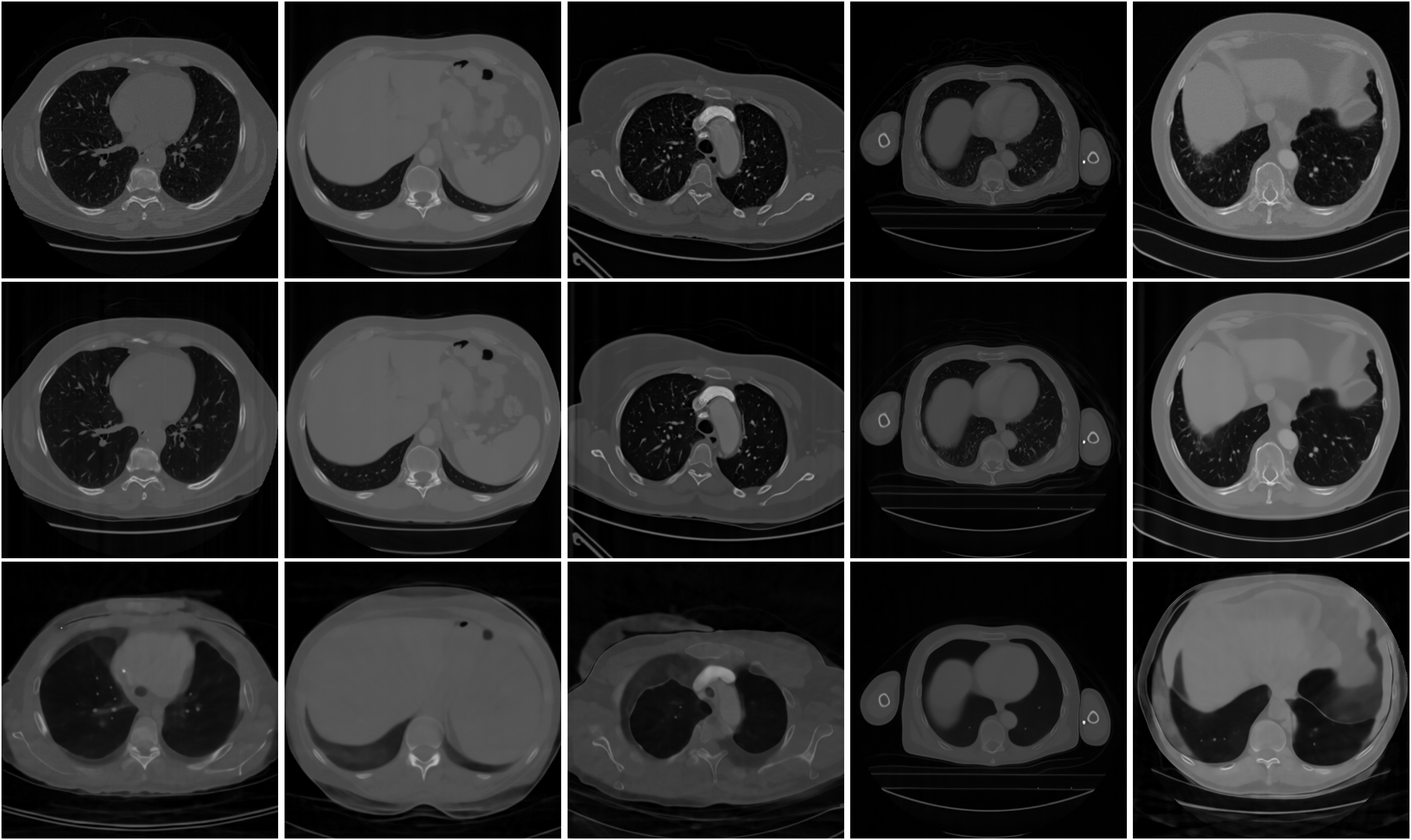}};
% Row labels
\draw (-9.2, 3.4) node{\fontsize{8pt}{9pt}\selectfont References};
\draw (-9.2, 0) node{\fontsize{8pt}{9pt}\selectfont R-BGDM};
\draw (-9.1, -0.3) node{\fontsize{8pt}{9pt}\selectfont {(\textbf{ours})}};
\draw (-9.2, -3.4) node{\fontsize{8pt}{9pt}\selectfont DDNM};
\end{tikzpicture}
\end{adjustbox}
\caption{Additional results from sparse-view CT reconstruction on LIDC dataset with 23 projections.}
\label{fig11}
\end{figure*}
\begin{figure*}[!ht]
\begin{adjustbox}{width=1\linewidth}
\begin{tikzpicture}
\draw (0, 0) node[inner sep=0] {\includegraphics[width=0.94\textwidth]{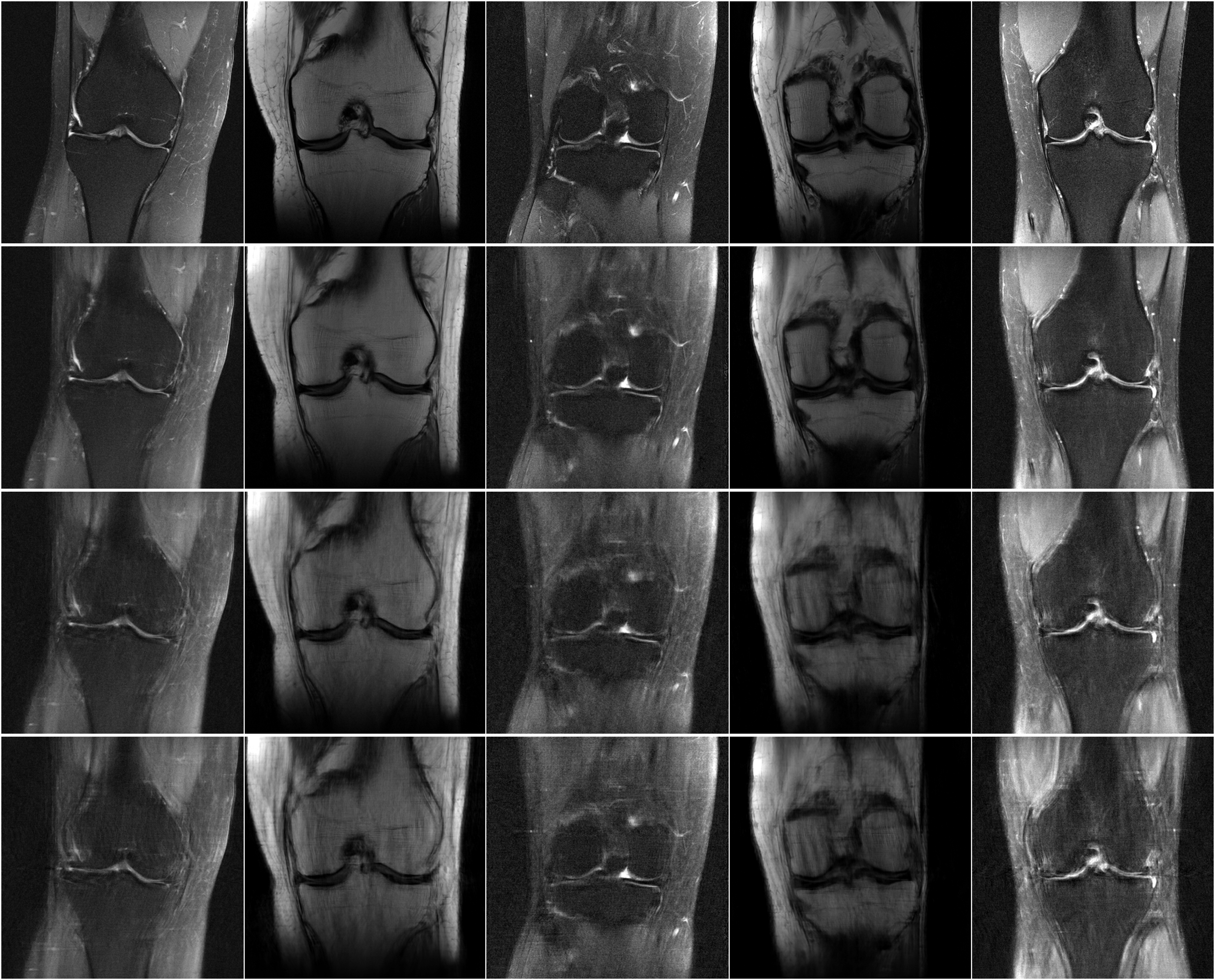}};
% Row labels
\draw (-9.2, 5.1) node{\fontsize{8pt}{9pt}\selectfont References};
\draw (-9.2, 1.4) node{\fontsize{8pt}{9pt}\selectfont BGDM};
% Adding "ours" below R-BGDM
\draw (-9.2, 1.1) node{\fontsize{8pt}{9pt}\selectfont {(\textbf{ours})}};
\draw (-9.2, -1.4) node{\fontsize{8pt}{9pt}\selectfont DDNM};
\draw (-9.2, -5.1) node{\fontsize{8pt}{9pt}\selectfont DPS};
\end{tikzpicture}
\end{adjustbox}
\caption{Additional reconstruction results for undersampled knee fastMRI at 4x acceleration rate.}
\label{fig12}
\end{figure*}
\begin{figure*}[!ht]
\begin{adjustbox}{width=1\linewidth}
\begin{tikzpicture}
\draw (0, 0) node[inner sep=0] {\includegraphics[width=0.94\textwidth]{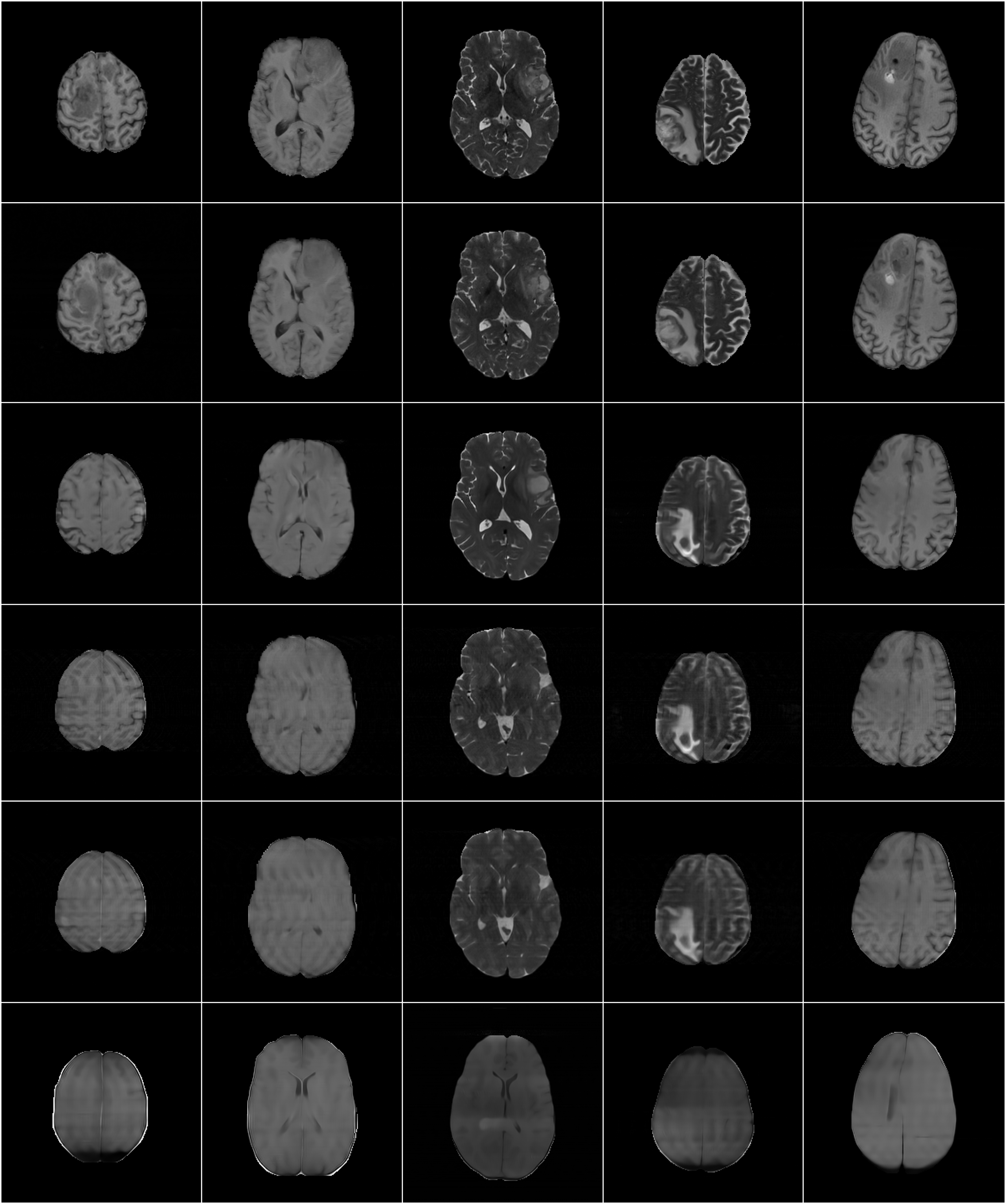}};
% Row labels
\draw (-9.2, 8.4) node{\fontsize{8pt}{9pt}\selectfont References};
\draw (-9.2, 5.1) node{\fontsize{8pt}{9pt}\selectfont R-BGDM};
\draw (-9.2, 1.6) node{\fontsize{8pt}{9pt}\selectfont BGDM};
\draw (-9.2, -1.6) node{\fontsize{8pt}{9pt}\selectfont {Ours\,$_{\text{no-\textbf{i}}}$}};
\draw (-9.2, -5.1) node{\fontsize{8pt}{9pt}\selectfont {Ours\,$_{\text{no-\textbf{ir}}}$}};
\draw (-9.2, -8.4) node{\fontsize{8pt}{9pt}\selectfont {Ours\,$_{\text{no-\textbf{or}}}$}};
\end{tikzpicture}
\end{adjustbox}
\caption{Additional results of our ablation study from undersampled MRI reconstruction on Brats at 24x acceleration rate.}
\label{13}
\end{figure*}
\begin{figure*}[!ht]
\begin{adjustbox}{width=1\linewidth}
\begin{tikzpicture}
\draw (0, 0) node[inner sep=0] {\includegraphics[width=0.94\textwidth]{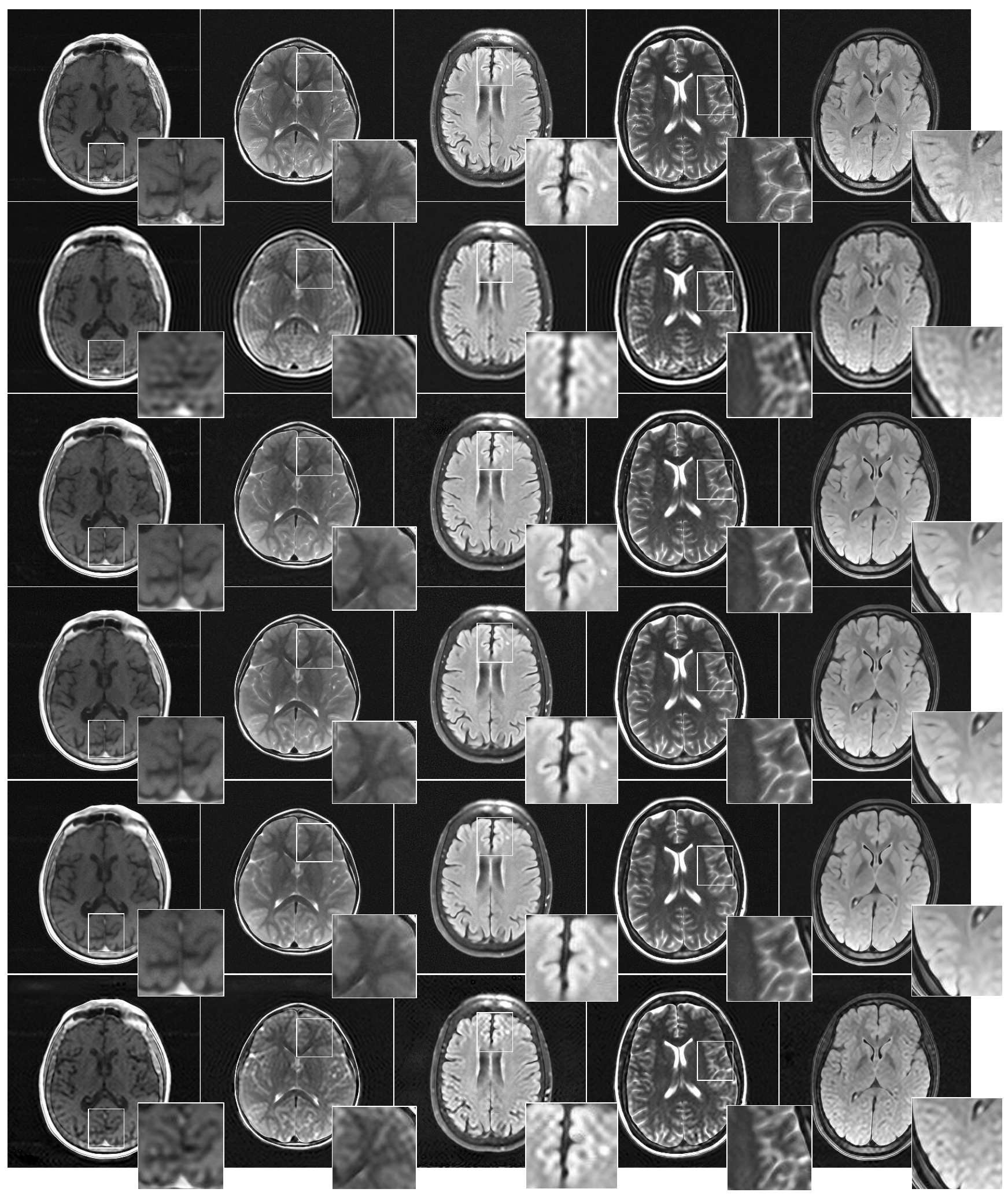}};
% Row labels
\draw (-9.5, 8.5) node{\fontsize{8pt}{9pt}\selectfont References};
\draw (-9.5, 8.2) node{\fontsize{8pt}{9pt}\selectfont (high-resolution)};
\draw (-9.5, 5.2) node{\fontsize{8pt}{9pt}\selectfont Low-resolution};
\draw (-9.5, 1.8) node{\fontsize{8pt}{9pt}\selectfont R-BGDM};
\draw (-9.5, 1.5) node{\fontsize{8pt}{9pt}\selectfont (\textbf{ours})};
\draw (-9.5, -1.3) node{\fontsize{8pt}{9pt}\selectfont BGDM};
\draw (-9.5, -1.6) node{\fontsize{8pt}{9pt}\selectfont (\textbf{ours})};
\draw (-9.5, -4.8) node{\fontsize{8pt}{9pt}\selectfont DDNM};
\draw (-9.5, -8.1) node{\fontsize{8pt}{9pt}\selectfont DPS};
\end{tikzpicture}
\end{adjustbox}
\caption{Additional results for super-resolution fastMRI at 16x acceleration rate.}
\label{14}
\end{figure*}

\end{appendices}

\end{document}